\documentclass[a4paper, journal]{IEEEtran}
\IEEEoverridecommandlockouts
\usepackage{cite}
\usepackage{amsmath,amssymb,amsfonts}
\usepackage{algorithmic}
\usepackage{graphicx}
\usepackage{textcomp}
\usepackage{xcolor}
\usepackage{float}
\usepackage{paralist}
\usepackage{amsthm}
\usepackage{array}
\usepackage{booktabs, multirow} %
\usepackage{soul} %
\usepackage{xcolor,colortbl} %
\usepackage{changepage,threeparttable} %
\usepackage{longtable}
\usepackage{amsmath,amsfonts}
\usepackage{algorithmic}
\usepackage{algorithm}
\usepackage{array}
\usepackage{tikz}

\newtheorem{theorem}{Theorem}
\newtheorem{proposition}{Proposition}
\def\BibTeX{{\rm B\kern-.05em{\sc i\kern-.025em b}\kern-.08em
    T\kern-.1667em\lower.7ex\hbox{E}\kern-.125emX}}
\begin{document}
\title{Inverse Learning assisted V2I Communication  for Intent Based 6G ISAC Vehicular Networks\\
\thanks{A preliminary version of this work was presented at IEEE WCNC 2026~\cite{CV2604:Inverse}.}
}

\author{\IEEEauthorblockN{ Anoop C V and Anup Aprem\\}
\IEEEauthorblockA{\textit{Dept. of Electronics and Communication Engineering},
\textit{National Institute of Technology Calicut},
India  \\
anoop\_p210065ec@nitc.ac.in, anup.aprem@nitc.ac.in}
    
}

\maketitle
\begin{abstract}
6G communication is expected to bring unprecedented advancements in the capabilities and efficiency of vehicular networks. However, the advent of 6G will also introduce significant changes in the operation of vehicular communication infrastructures such as roadside units (RSUs), including the incorporation of autonomous intent-based network paradigm and integrated sensing and communication (ISAC) capabilities. While ISAC enables sensing and communication capabilities within a single 6G network node, intent-based network design paradigm ensures that network nodes such as RSUs, act as autonomous cognitive agents to fulfill the objectives of their respective communication service providers. This paradigm shift necessitates the development of vehicle to infrastructure (V2I) communication strategies that learns and adapts to the sensing-assisted communication and the autonomous decision-making strategies of RSUs. 
We model the RSU as a constrained utility maximizer, where the utility function characterizes the RSU intent, and formulate an inverse learning (IL) problem to infer the underlying utility function from observed (revealed) ISAC RSU actions, for example the adaptive beamwidth allocation in response to the kinematic states of vehicles within a vehicular micro-cloud (VMC).
The main contributions of this paper are: (i) ATIL, a nonparametric method based on Afriat’s theorem for fixed utility learning; (ii) FICNNIL, a parametric approach using fully input-concave neural networks, for structured fixed utility learning; and (iii) PICNNIL, a parametric approach based on partially input-concave neural networks, for inverse learning of state-dependent utilities. (iv) Federated inverse learning algorithms FedFICNNIL and FedPICNNIL for fixed and state dependent utility, respectively.
We demonstrate the proposed IL-based framework for two V2I communication applications in VMCs, namely predictive scheduling for cooperative data downloading and dynamic cluster-head selection, under both fixed and state dependent RSU utility models. Numerical results show that the proposed IL approaches consistently outperform non-learning baselines in terms of throughput and packet loss. Moreover, ATIL, FICNNIL, PICNNIL, FedFICNNIL and FedPICNNIL outperform representative parametric and classical supervised learning methods for inverse learning. 

\end{abstract}

\begin{IEEEkeywords}
6G Vehicular Network, Vehicular Micro-Clouds, Adaptive communication, Predictive communication strategies, Integrated Sensing and Communication (ISAC), Machine learning/Network Intelligence for ISAC, ISAC for vehicular-to-everything (V2X) networks, Utility Maximization, Inverse Learning.
\end{IEEEkeywords}

\section{Introduction}

The upcoming sixth generation (6G) of mobile communication networks, envisioned under the International Telecommunication Union (ITU) framework for IMT-2030 \cite{ITU_R_June_2023}, is expected to support novel usage scenarios such as ubiquitous connectivity, intent based networks, and Integrated Sensing and Communication (ISAC).  
In intent-based network design \cite{ericson2023IntentNW,zhang2023intent6G,tmforum_intent}, network nodes act as autonomous cognitive agents that align their operations with the intent (or objectives) of communication service providers (CSPs).
While, ISAC enables sensing and communication capabilities within a single 6G network node \cite{ETSI_GR_ISC}.
Several recent research efforts have demonstrated the effectiveness of ISAC-enabled, reconfigurable Roadside Units (RSUs) in enabling adaptive, sensing-assisted vehicular communication \cite{Liu2023ISAC_VN_Chapter,ISACVNC2022funda}. 

Our work is motivated by futuristic vehicular communication use cases such as vehicular platoons and vehicular micro-clouds (VMC) for improved communication efficiency and the realization of safe and economic transport, initially recommended by the 5G Automotive Association (5GAA) and subsequently extended by ETSI through the integration of ISAC \cite{5GAA_ETSI,ETSI_GR_ISC,Platoon2024proximity3}.\footnote{For further details, the interested reader is referred to the technical reports of ITU-R Working Party 5D on IMT-2030 \cite{ITU_R_June_2023}, ETSI Industry Specification Group (ISG) on ISAC \cite{ETSI_GR_ISC}, and vehicular communication \cite{5GAA_ETSI}.} 
For instance, cooperative data downloading based on VMCs has been shown to achieve up to $70\%$ improvement in median download time, energy efficiency per unit data transmitted, and packet loss ratio, while ensuring reliable delivery within the vehicle--RSU contact duration \cite{survey2019coopDownlAdvant}.
Below, we first describe the challenges for vehicular communication in an intent based ISAC RSU, followed by our contributions.

{\sl \textbf{Intent based ISAC RSU}: }
At any given time instance, an intent based ISAC RSU senses (e.g., kinematic states of the vehicles) its dynamic environment and takes decisions by adaptively switching waveform parameters or dynamically allocating resources
The \emph{rational behavior} of ISAC RSU can be mathematically modeled as a constrained optimization problem \cite{ericson2023IntentNW,zhang2023intent6G,tmforum_intent} of the form
\begin{equation}\label{eq:RSU_Strategy_general}
    \max_{\boldsymbol{\beta} \in \Pi_{\boldsymbol{\alpha}}} 
    \; U(\boldsymbol{\cdot}),
\end{equation}
where $\boldsymbol{\beta}$ denotes the decision variables (e.g., waveform parameters, resource allocation variables), $U(\boldsymbol{\cdot})$ denotes the \emph{utility function}, and $\Pi_{\boldsymbol{\alpha}}$ denotes the feasible set, with the subscript ${\boldsymbol{\alpha}}$ parameterizing the state of the environment (e.g., position, velocity, or their uncertainty of the vehicles, sensed by the RSU). 
The utility function $U(\boldsymbol{\cdot})$ characterizes the intent of the RSU, and in practice can be a function of the decision variable \(\boldsymbol{\beta}\) alone (referred to as \emph{fixed utility}, \(U(\boldsymbol{\beta})\), in this paper) or both \(\boldsymbol{\beta}\) and the state of the world \(\boldsymbol{\alpha}\) (referred to as \emph{state dependent utility, \(U(\boldsymbol{\alpha},\boldsymbol{\beta})\)} in this paper).
$U(\boldsymbol{\cdot})$ in \eqref{eq:RSU_Strategy_general} is typically \emph{concave} and (componentwise) \emph{monotonically increasing} in $\boldsymbol{\beta}$ so as to reflect saturation effects in performance gains (i.e., additional resource/waveform refinement yields progressively smaller improvements) while ensuring that better QoS/KPI performance is always preferred. 
Further, the feasible set $\Pi_{\boldsymbol{\alpha}}$ captures the underlying resource limitations or constraints imposed by QoS requirements or KPI of the service, which govern the RSU's adaptive behavior given the environmental state $\boldsymbol{\alpha}$.

{\sl \textbf{Challenges in V2I communication in Intent based ISAC RSU: }}
In intent-based vehicular networks comprising autonomous ISAC RSUs with adaptive decision-making and sensing capabilities, and capable of strategic behavior, the design of adaptive vehicle to infrastructure (V2I) communication strategies must explicitly account for the RSU's decision-making mechanism. However, the RSU decision-making strategy, characterized by the utility function in \eqref{eq:RSU_Strategy_general}, may not be explicitly known to the vehicles. \emph{Consequently, vehicles must employ data-driven approaches to infer the RSU strategy (utility) by observing revealed RSU behavior---such as waveform parameters, beam time, and communication data rate---denoted by $\boldsymbol{\beta}$, selected by the ISAC RSU in response to the state of the vehicles $\boldsymbol{\alpha}$, in a non-cooperative setting. We refer to this problem as ``Inverse Learning (IL)''.}

While developing a data driven IL framework for vehicular communication network contexts, an important aspect is whether the dataset for training the IL model is available at a centralized server or distributed in a decentralized manner across different VMCs (referred to as clients), which may or may not be willing to share the data due to communication overhead or privacy concerns---leading to centralized or decentralized (federated) learning paradigms.

In this context, the key contributions of this paper are summarized as follows:
\begin{enumerate}[I.]
\item We develop three inverse learning frameworks:
\begin{enumerate}[(i)]
    \item \textbf{ATIL (Afriat's-Theorem-based Inverse Learning):} a \emph{nonparametric} approach that reconstructs a utility function consistent with the observed ISAC RSU behaviour using Afriat's theorem from revealed preference theory; suitable for \emph{fixed RSU utilities} under centralized learning paradigms.
    \item \textbf{FICNNIL (FICNN-based Inverse Learning):} a \emph{parametric} approach that models the RSU utility using a Fully Input Concave Neural Network (FICNN) \cite{amos2017input_ICNN_Seminal,grzeskiewicz2025uncovering_ICNN_Utility_Max} and leverages revealed preference (see Sec.~\ref{subsec:ATIL}) based constraints for training; suitable for \emph{fixed RSU utilities} in both \emph{centralized and decentralized learning paradigms}.
    \item \textbf{PICNNIL (PICNN-based Inverse Learning):} a \emph{parametric} approach that models the RSU utility using a Partially Input Concave Neural Network (PICNN) \cite{amos2017input_ICNN_Seminal} to capture \emph{state-dependent utilities}, and leverages revealed preference based constraints for training; suitable for both \emph{centralized and decentralized learning paradigms}.
    \item We propose a novel bi-level training process for training FICNN and PICNN - see Sec.~\ref{subsec:ICNNTraining} for details.
\end{enumerate}

\item We propose FedFICNNIL and FedPICNNIL, algorithms for federated inverse learning of ISAC RSU nodes in a decentralized setting, where training data are not available at a centralized server and the available training observations are heterogeneous due to the state-dependent nature of the RSU utility.

\item We demonstrate ATIL, FICNNIL, and PICNNIL for two inverse-learning-assisted V2I communication applications in 6G-ISAC vehicular networks:
\begin{inparaenum}[(i)]
    \item predictive scheduling for cooperative data downloading in VMCs, and
    \item dynamic cluster-head selection in VMCs.
\end{inparaenum}
    
\item Numerical results show that the proposed IL-based approaches (ATIL, FICNNIL, PICNNIL, FedFICNNIL and FedPICNNIL) outperform 
existing techniques in the literature for predictive scheduling and cluster head selection
in terms of throughput and packet loss for both applications. Further, ATIL, FICNNIL, PICNNIL, FedFICNNIL and FedPICNNIL also outperform representative parametric baselines and standard supervised machine-learning approaches.
\end{enumerate}

\emph{To the best of our knowledge, no existing literature addresses the problem of developing data-driven V2I communication strategies within a 6G-ISAC intent-based autonomous network with unknown RSU strategy.}

The rest of this paper is organized as follows: Section~\ref{sec:Background} outlines the related literature that motivates the formulation of our problem.
Sec.~\ref{sec:SysModel} provides an overview of the 6G-ISAC vehicular network infrastructure considered in this paper. Sec.~\ref{sec:ProposedMethod} presents the proposed inverse learning methodologies in both centralized and federated learning paradigms, and also presents the IL assisted adaptive V2I strategies. Subsequently, Sec.~\ref{sec:NumRes} demonstrates the application of IL-based strategies for adaptive selection of the VMC cluster head and predictive scheduling in cooperative data downloading within the VMC, and compares the results with state-of-the-art machine learning approaches, and existing non-learning based communication strategies. Finally, Sec.~\ref{sec:Conclusion} concludes the paper.

\section{Related Literature}\label{sec:Background}
{\sl \textbf{ISAC for vehicular networks}: }6G vehicular communication should support very high data rate services such as real time video transmission and HD map updates -- demanding very stringent QoS requirements namely ultra high throughput (at least $100$ Gbps), high reliability (at least \(99.9999\%\)) ultra low delay (less than \(0.1\)ms) \cite{Liu2025ISACVNCSurveyCommincationOpenCall}. One among the key enablers of these QoS requirements is precise massive MIMO (mMIMO) beam forming with sharp pencil-beam pattern and low side lobes, which in turn requires precise target localization\cite{Liu2025ISACVNCSurveyCommincationOpenCall,Liu2023ISAC_VN_Chapter,ISACVNC2022funda}. ISAC enabled beamforming can solve the inherent spectral efficiency-localization trad eoff of the conventional pilot based communication only localization protocols\cite{ISACVNC2022funda,Liu2025ISACVNCSurveyCommincationOpenCall}.

In high-mobility vehicle-to-infrastructure (V2I) scenarios, conventional pilot-based beam tracking can incur significant signaling overhead and may become unreliable due to rapid vehicle motion, narrow mmWave/mMIMO beams, and changing roadway geometry. ISAC-enabled roadside units (RSUs) address this challenge by reusing the communication waveform for sensing and processing the reflected echoes to estimate vehicle-related parameters such as position, velocity, and angular direction, which can then be used for predictive beam alignment and reliable infrastructure-to-vehicle beamforming~\cite{Liu2025ISACVNCSurveyCommincationOpenCall,liu2020radar,Liu2023ISAC_VN_Chapter,wang2021greenIOV,liu2023energy}. Recent studies have further extended ISAC-based vehicular systems toward extended-target tracking, roadway-geometry-aware beam management, multi-vehicle tracking, behavior-aware beamforming, hybrid beamforming, and learning-based joint sensing-communication optimization~\cite{Cong2023VehicularBehaviorISAC,Yu2023HybridBeamformingIoV,Liu2023ISAC_VN_Chapter}. These works show that the communication performance of an ISAC-enabled vehicular network is closely coupled with sensing and tracking accuracy, since inaccurate tracking can cause beam misalignment and throughput degradation, whereas excessive sensing may reduce the resources available for communication. Hence, adaptive sensing-communication resource allocation becomes essential for maintaining reliable vehicular connectivity. However, most existing works focus on the forward design of known RSU objectives or policies, while the inverse problem of learning the latent decision-making behaviour of an adaptive ISAC-enabled RSU from observed state-action data remains unexplored.

{\sl \textbf{Intent based reconfigurable ISAC RSU design}: }Intent Driven networks (IDN) focuses on conceiving elements of the 6G network as self-contained rational agents that maximize specific utility functions according to the strategies of the respective CSPs \cite{zhang2023intent6G,ericson2023IntentNW}, and can adaptively reconfigure communication and sensing parameters based on dynamic requirements. In the future 6G ecosystem, the utility of an autonomous network node will depend on the monetization plan of the CSP\cite{zhang2023intent6G}, as well as the relevance of achieving a certain level of compliance with the QoS requirements specified by IMT-2030 capabilities. Some of the key metrics of interest that different CSPs try to optimize during ISAC resource allocation \cite{al2024ISACresources,Liu2025ISACVNCSurveyCommincationOpenCall} are \begin{inparaenum}[(i)]
    \item probability of detection,
    \item error bounds of sensed estimates,
    \item beam misalignment, and
    \item overall data rate.
\end{inparaenum}
For example, an RSU equipped with ISAC capabilities and operating under a CSP that prioritizes sensing accuracy over high data rates will adopt a fundamentally different radio resource allocation and multiple access strategy compared to an RSU managed by a communication-centric CSP \cite{lu2024ISAC_challenges}.

{\sl \textbf{VMCs and platoons in intent based 6G-ISAC RSU Infrastructure}: } 
VMCs and vehicular platoons are intelligent connected vehicles, cooperating with each other as a cluster, for improved communication and traffic efficiency. 
Cooperative data downloading in VMC \cite{survey2019coopDownlAdvant} can enhance timely delivery of large data (e.g., media files, high-definition maps), thereby reducing packet loss and minimizing the need for frequent retransmission. \cite{chen2018timely,electronics11223663PreCache} show that predictive scheduling, which accounts for channel capacity variations, can enhance throughput and reduce packet loss in cooperative data downloading within VMCs.

However, the characteristics of the I2V channel from 6G ISAC RSU, such as capacity and reliability, depend on both the tracking accuracy (required for narrow beamforming) and the transmission strategy adopted by the RSU. Consequently, if a vehicle, via the vehicle-to-infrastructure (V2I) uplink, requests data exceeding the I2V downlink capacity of the RSU, this may lead to packet retransmission or data loss, thereby degrading the streaming quality and suboptimal power efficiency\cite{survey2019coopDownlAdvant}.
\emph{But the strategy adopted by the RSU may not be publicly known and vary from time to time based on the business needs of CSP. This paradigm shift introduces significant challenges in the design of communication protocols for future vehicular networks, necessitating the development of novel and adaptive V2I protocols for 6G ISAC scenarios, particularly when the I2V beamforming and multiple access strategies of RSUs are not explicitly known.}

Existing algorithms in the intelligent connected vehicle literature rely mainly on pilot based channel estimation techniques and exhibit limitations such as sub-optimal throughput, high packet loss, and increased retransmission \cite{hagenauer2019explainVMC}.
The existing works on cluster head selection and predictive scheduling focuses on non-intent based 5G RSUs and generally assume that the RSU strategy is known to the VMC. In this paper, we extend the \emph{cluster head selection} and \emph{predictive scheduling} problems in intelligent connected vehicles literature to intent based 6G-ISAC context. Furthermore, conventional supervised regression/classification models, which require a large amount of training data to learn RSU behaviour, cannot be directly employed in the IL context, since the relationship between the state of the world and the RSU is not governed by a simple functional mapping, but rather through a constrained optimization problem as given in~\eqref{eq:RSU_Strategy_general} \cite{anoop2025bayesian}.

\section{System Model}\label{sec:SysModel}
We consider a scenario in which ISAC-RSUs\footnote{For a detailed treatment of ISAC waveform design and the integration of sensing and communication using a common waveform, see \cite{Liu2023ISAC_VN_Chapter,ISACVNC2022funda}.} equipped with mMIMO and beam steering capability serves VMC clusters comprising \( M \) different vehicles within their coverage area. The sensing aided vehicular communication strategy adopted by a single RSU is illustrated in Fig.~\ref{RSU_V2I}.  
In the following, we first mathematically model the 6G Vehicular network infrastructure in Sec.~\ref{subsec:VehNetInfr}. Further, ISAC target tracking and sensing assisted adaptive beam forming are discussed in Sections~\ref{sec:ISACTracking} and \ref{sec:ISACAB} respectively.
\subsection{6G Vehicular Network Infrastructure}\label{subsec:VehNetInfr}
\begin{figure}[ht]
    \centering
    \includegraphics[width=0.8\linewidth,trim={0 0 0 0},clip]{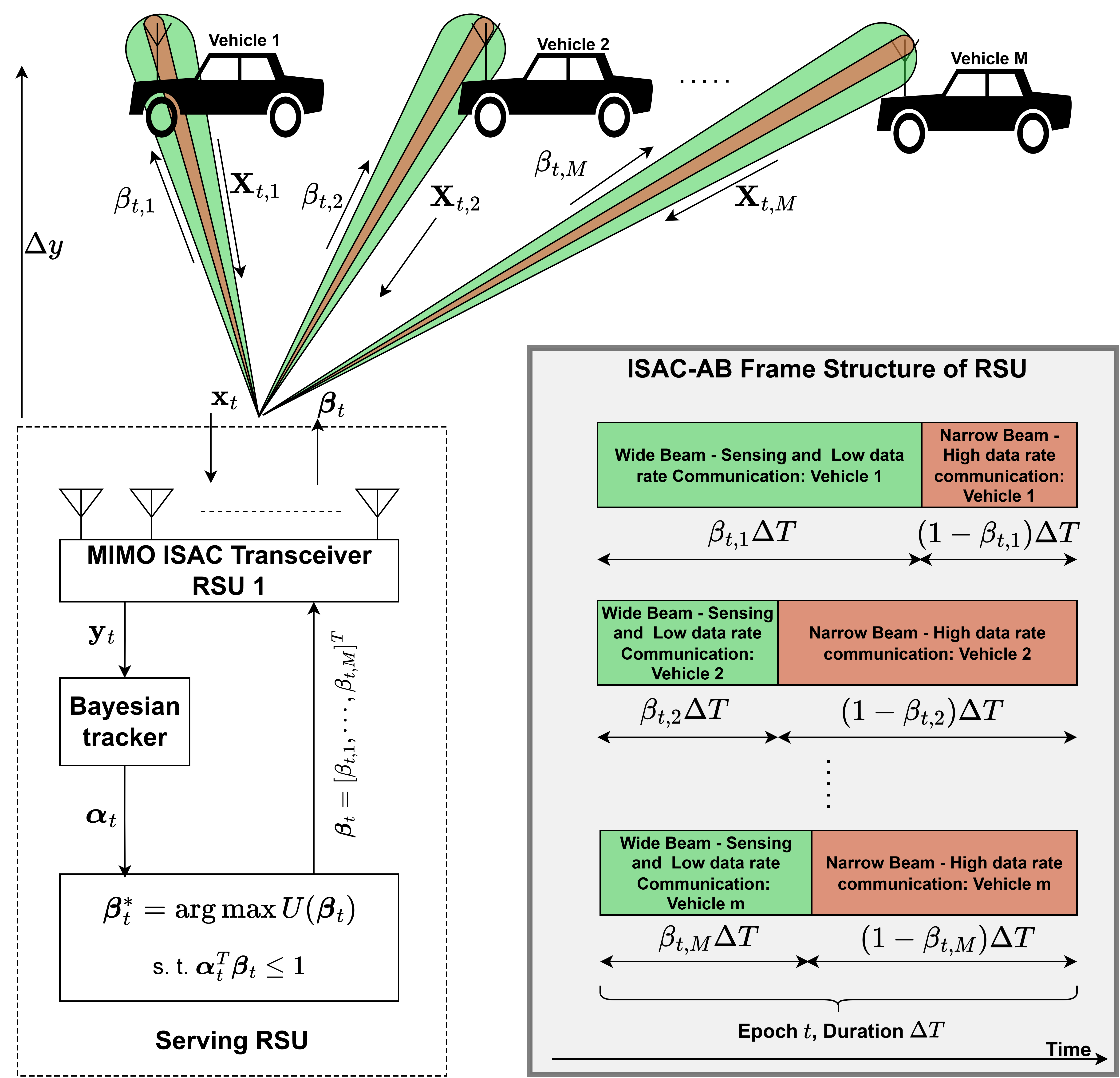}
    \caption{Sensing-aided 6G vehicular communication network with reconfigurable ISAC-RSU. The RSU, modeled as constrained utility maximizer (as in \eqref{eq:ISAC_AB_Objective}), employs ISAC adaptive beamforming strategy~\cite{Liu2023ISAC_VN_Chapter} to reconfigure the ISAC beamwidth according to the sensing accuracy of its Bayesian tracker. The beam allocation strategy of the RSU, and hence the maximum capacity of the I2V link, is unknown to the vehicles. Thus, any adaptive V2I communication strategies must involve \emph{inverse learning} of the RSU strategy from past observations.}
    \label{RSU_V2I}\vspace{-0.5em}
\end{figure}
We assume that the RSUs adopt ISAC Adaptive Beamforming (ISAC AB), allowing joint vehicle tracking and vehicular communication \cite{Liu2023ISAC_VN_Chapter} in a time-division manner. As shown in Fig.~\ref{RSU_V2I}, in a given epoch \( t \), an RSU, utilizes a Bayesian tracker to estimate the angle \( (\phi_{t,m}) \), distance \( (d_{t,m}) \), and velocity \( (v_{t,m}) \) of the vehicle \(m \in \{1,2,\cdots,M\}\), (collectively represented as \(\mathbf{X}_{t,m}\) in Fig.~\ref{RSU_V2I}) in a VMC, based on the echoes of wide beam ISAC signals reflected from it.
ISAC AB resource allocation strategy in RSU, as depicted by the frame structure in Fig.~\ref{RSU_V2I}, splits each epoch into two time slots, wherein the first slot \( k \in [(t-1)\Delta T, (t-1+\beta_{t,m})\Delta T] \) transmits a wide ISAC beam for accurate sensing and low data rate communication, and the second slot \( k \in ((t-1+\beta_{t,m})\Delta T, t\Delta T] \) transmits a high data rate narrow beam solely for communication, where \( \beta_{t,m} \in [0,1] \) is the splitting factor for vehicle \(m\), and \(\Delta T\) is duration of an epoch.
For example, a sensing-centric CSP may aim to keep \( \beta_{t,m} \) close to $1$ to maximize sensing accuracy, while a communication-centric CSP may optimally select a splitting ratio that minimizes beam misalignment loss while maximizing overall system throughput. Hence, different CSPs may adopt strategies aligned with their intentions to determine the optimal splitting ratios.

\subsection{ISAC Target Tracking}\label{sec:ISACTracking}
The RSU is located at origin, while the \( M \) targets, each modeled as point scatterers \cite{Liu2023ISAC_VN_Chapter}, move within the footprint \(((x_{\max},\Delta y) \text{ to } (x_{\min},\Delta y))\) of the RSU at a nearly constant velocity. RSU use Bayesian filters\cite{bayesianFilters2003survey} to continuously track the state of the target \( m \) at any epoch \( t \),  (comprising angle, distance, and velocity), \( \mathbf{X}_{t,m} = [\phi_{t,m}, d_{t,m}, v_{t,m}]^T \). The state evolution model and the linear Gaussian measurement model are given in \eqref{eq:NLTarDynMod} and \eqref{eq:LinGausObsMod}, respectively:
\begin{align}\label{eq:NLTarDynMod}
    \mathbf{X}_{t,m} &= \mathbf{h}\left(\mathbf{X}_{t-1,m}\right) + \mathbf{w}_{t,m}\text{; }\mathbf{w}_{t,m}\sim \mathcal{N}(0,Q),\\
    \mathbf{y}_{t,m} &= \mathbf{X}_{t,m} + \nu_{t,m}\text{; }\mathbf{\nu}_{t,m}\sim \mathcal{N}(0,R),\label{eq:LinGausObsMod}
\end{align}
where \( Q\in\mathbb{R}^{3\times 3} \) and \( R\in\mathbb{R}^{3\times 3} \) are standard process noise covariance and observation noise covariance matrices, respectively \cite{Liu2023ISAC_VN_Chapter}. Based on the observations \( \mathbf{y}_{1:t,m} \), the RSU estimates its belief of the state of target \( m \):
\begin{equation}\label{eq:EKFBelief}
    \pi_{t,m} = P\left({\mathbf{X}}_{t,m}\mid\mathbf{y}_{1:t,m}\right).
\end{equation}
If the state evolution model $\mathbf{h}(\cdot)$ is linear, then the RSU employs a standard Kalman filter, and if it is non-linear, the RSU uses extended Kalman Filter (EKF) \cite{Liu2023ISAC_VN_Chapter} to approximate \eqref{eq:EKFBelief} as \( \pi_{t,m} \sim \mathcal{N}(\hat{\mathbf{X}}_{t,m},\boldsymbol{\Sigma}_{t,m}) \) by linearizing \( \mathbf{h(\cdot)} \), where \( \hat{\mathbf{X}}_{t,m}\in \mathbb{R}^3 \) is the conditional mean state estimate, and \( \boldsymbol{\Sigma}_{t,m}\in \mathbb{R}^{3\times 3} \) is the covariance of the state of target \( m \) at any given epoch \( t \).

\subsection{Adaptive Beam Allocation in an intent based ISAC RSU}\label{sec:ISACAB} 
Let \( \boldsymbol{\beta}_t = [\beta_{t,1},\beta_{t,2},\cdots,\beta_{t,M}]^T \), denote the vector of splitting ratios (see Fig.~\ref{RSU_V2I} in Sec~\ref{subsec:VehNetInfr}) at each time for the RSU.
The RSU under the constraints imposed by it's CSP assigns $\beta_{t,m}$ by solving a utility maximization problem as in \eqref{eq:RSU_Strategy_general}. The utility function $U(\cdot)$ encodes the intent of the CSP operating the RSU. 
We abstractly model the utility maximization problem as follows. For each vehicle \(m\) the RSU employs, a Bayesian tracker to estimate the state $\mathbf{X}_{t,m}$ and the corresponding covariance estimate, $\boldsymbol{\Sigma}_{t,m}$, using \eqref{eq:EKFBelief}. The predictive covariance, $\boldsymbol{\Sigma}_{t|t-1,m}$, obtained using a standard Bayesian filtering recursion gives the measure of the uncertainty anticipated by the RSU in the estimate for the next epoch, and the accuracy of the target state estimate in epoch \(t\) is denoted as:
\begin{equation}\label{eq:pobeElem}
    \alpha_{t,m} = \text{trace}(\boldsymbol{\Sigma}_{t|t-1,m}^{-1}), \quad m = 1,2,\cdots,M.
\end{equation}
In each epoch, the RSU determines the optimal splitting ratio, \(\boldsymbol{\beta}_t^*\), by solving a constrained optimization problem as in \eqref{eq:RSU_Strategy_general} where the utility characterizes the RSU intent and the state of the environment parameterized by the vector of tracking accuracy \(\boldsymbol{\alpha}_t  = [\alpha_{t,1},\alpha_{t,2},\cdots,\alpha_{t,M}]^T\) will decide the constraints.

{\sl Fixed and state dependent RSU utility: }In this paper, we consider two types of RSU utilities, namely: \begin{inparaenum}[(i)]
    \item fixed RSU utility, \(U(\boldsymbol{\beta}_t)\): the utility depends only on the RSU decision variable \(\boldsymbol{\beta}_t\), and is independent of the state of the world parameterized by \(\boldsymbol{\alpha}_t,\) and
    \item state dependent RSU utility, \(U(\boldsymbol{\alpha}_t,\boldsymbol{\beta}_t)\): utility varies as a function of the state of the world \(\boldsymbol{\alpha}_t.\)
\end{inparaenum} In this paper, the state-dependent utility of an ISAC RSU captures the dependence of its preference between sensing and communication on the state of the world, parameterized by the overall vehicle tracking accuracy. For example, the tracking accuracy of an RSU serving VMCs on a straight highway with lanes having specific speed limits will be significantly higher than that of an RSU serving VMCs on roads with junctions, bends, and permission to overtake. State-dependent utility allows the RSU to assign very low priority to tracking when the overall tracking accuracy is good, and to allocate very high priority for accurate tracking when the overall tracking accuracy is low. In contrast, under a fixed RSU utility, the preference between sensing and communication remains fixed irrespective of the state \(\boldsymbol{\alpha}_t\). 

The constrained optimization problem that the ISAC RSU solves to find the optimal beam splitting ratio \(\boldsymbol{\beta}_t^*\) for fixed and state dependent settings accordingly is given by \eqref{eq:ISAC_AB_Objective}:
\begin{equation}
    \begin{aligned}\label{eq:ISAC_AB_Objective}
    \textbf{Fixed Utility: }&\boldsymbol{\beta}_t^* = \arg\max_{\boldsymbol{\alpha}_t^T\boldsymbol{\beta}_t\le 1}U(\boldsymbol{\beta}_t),\\
    \textbf{State Dependent Utility: }&\boldsymbol{\beta}_t^* = \arg\max_{\boldsymbol{\alpha}_t^T\boldsymbol{\beta}_t\le 1}U(\boldsymbol{\alpha}_t,\boldsymbol{\beta}_t).
\end{aligned}
\end{equation}
The constraints in \eqref{eq:ISAC_AB_Objective} ensure that when a vehicle \( m \) is accurately tracked by the RSU (i.e., perfect beam alignment can be assured), \( \beta_{t,m} \) is minimized to allocate a longer time duration to the high data rate narrow beam, thereby maximizing the total data throughput. Meanwhile, the objective functions in \eqref{eq:ISAC_AB_Objective} balance the trade-off between sensing accuracy and throughput by preventing \( \beta_{t,m} \) from becoming too small \cite{Liu2023ISAC_VN_Chapter}.
The utility function $U(\cdot)$ in \eqref{eq:ISAC_AB_Objective} characterizes the intent of the RSU, and is typically assumed to be \emph{concave} and (componentwise) \emph{monotonically increasing} in $\boldsymbol{\beta}$ so as to reflect saturation effects in performance gains (i.e., additional resource/waveform refinement yields progressively smaller improvements) while ensuring that better QoS performance is always preferred.
Accordingly, characterization of the RSU utility must satisfy both concavity and monotonicity with respect to the decision variable \(\boldsymbol{\beta}\).

The exact forms of the fixed and state dependent ISAC-AB utility functions are discussed in Sec.~\ref{subsec:PredSchedule}.
\section{The Proposed Inverse Learning Methodologies}\label{sec:ProposedMethod}
\begin{figure}[t]
    \centering
    \includegraphics[width=0.8\linewidth]{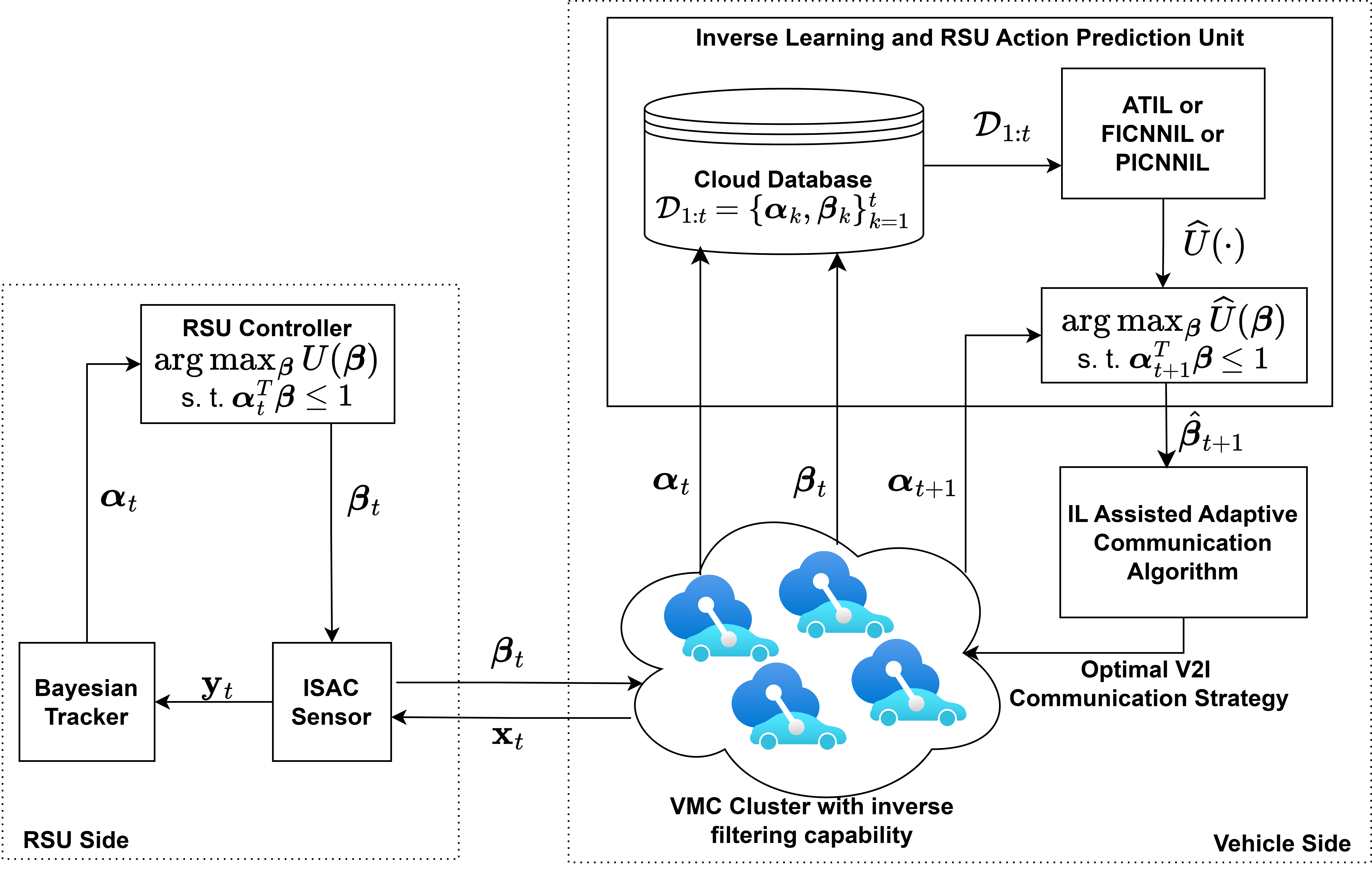}
    \caption{The proposed IL assisted adaptive communication strategies in VMC. ISAC RSU uses, \(\boldsymbol{\alpha}_t\), the tracking accuracy, of the Bayesian tracker to adaptively allocate ISAC-AB beam ratio (see Fig.~\ref{RSU_V2I}), which is unknown to VMC. The inverse learning procedure employed by the VMC aims to infer the unknown ISAC-AB beam switching strategy of the RSU.}
    \label{fig:BT_ATIL}
\end{figure}

\subsection{Learning Framework}\label{subsec:learningFW}
Fig.~\ref{fig:BT_ATIL} illustrates the proposed inverse learning framework for IL-assisted adaptive communication. 
At time \(t\), given the vehicle states \(\mathbf{x}_t=[\mathbf{X}_{t,1},\mathbf{X}_{t,2},\cdots,\mathbf{X}_{t,M}]^{T}\) and the corresponding RSU response \(\boldsymbol{\beta}_t\), the vehicles in the VMC employ a Bayesian inverse filtering procedure~\cite{krishnamurthy2019calibrate} to estimate the accuracy \(\boldsymbol{\alpha}_t\) of the RSU's Bayesian tracker. Over time, this yields a dataset \(\mathcal{D}_{1:t}\) consisting of the sequence of inferred sensing accuracies \(\boldsymbol{\alpha}_t\), together with the corresponding RSU beam allocation actions \(\boldsymbol{\beta}_t\). In the centralized learning paradigm, such data are assumed to be collected and stored in a cloud database for model training as shown in Fig.~\ref{fig:BT_ATIL}, whereas in the federated learning paradigm, the corresponding data remain distributed across different VMCs as shown in Fig.~\ref{fig:BT_ATIL_Fed}, and are used for local training without direct sharing of raw observations.
Based on \(\mathcal{D}_{1:t}\), the inverse learning algorithm aims to reconstruct the RSU strategy. Since the RSU intent, driven by the CSP, is modeled as a constrained utility maximization problem in~\eqref{eq:ISAC_AB_Objective}, inferring the underlying utility function \(U(\cdot)\) effectively characterizes the corresponding RSU decision-making strategy. The inferred utility can be subsequently exploited to design adaptive communication strategies for VMCs as presented in Sections~\ref{subsec:PredSchedule} and \ref{sec:ILDynamic}. 
Section~\ref{subsec:ATIL} describes ATIL, which employs revealed preference theory as a framework for non-parametric utility learning. 

\subsection{Revealed Preference and Afriat's Theorem based Inverse Learning (ATIL) - A Nonparametric Framework}\label{subsec:ATIL}
\begin{algorithm}[h]
{\fontsize{9pt}{9pt}\selectfont
\caption{ATIL Algorithm}\label{ATIL_Alg}
\begin{algorithmic}[1]
\REQUIRE $\mathcal{D}_{1:T}$
\STATE Generate Afriat's inequality as in \eqref{eq:AfrIneq}
\STATE Solve for $\mathcal{U}_{1:T} = \{U_t\}_{t=1}^T$ and ${\Lambda}_{1:T} = \{\lambda_t\}_{t=1}^T$
\STATE Use \(\mathcal{U}_{1:T}\) and \({\Lambda}_{1:T}\) in \eqref{eq:reconUtil} to reconstruct RSU utility $\hat{U}(\boldsymbol{\beta})$
\RETURN $\hat{U}(\boldsymbol{\beta})$
\end{algorithmic}}
\end{algorithm}
Revealed preference, a concept originally developed in microeconomics, is a framework for non-parametric learning of utility from observations of utility-maximizing behavior~\cite{varian1982nonparametric}. The celebrated Afriat's theorem~\cite{varian1982nonparametric} provides \begin{inparaenum}[(i)]
    \item necessary and sufficient conditions for detecting utility maximizing behavior,
    \item reconstructing utility function using piecewise linear functions.
\end{inparaenum}
\begin{theorem}[Afriat's Theorem\cite{AfriatsThm}]: 
Given a dataset 
\begin{equation}\label{eq:DataSet}
    \mathcal{D}_{1:T} = \{(\boldsymbol{\alpha}_t,\boldsymbol{\beta_t})\}_{t=1}^T,
\end{equation}
comprising observations of input-output behavior of an agent, the following statements are equivalent:
\begin{enumerate}
    \item The agent is a utility maximizer and there exists a monotonically increasing and concave utility function $U(\boldsymbol{\beta})$, which satisfies \eqref{eq:ISAC_AB_Objective}. 
    \item For $U_t$ and $\lambda_t>0$, the following set of inequalities has a feasible solution:
    \begin{equation}\label{eq:AfrIneq}
        U_s - U_t - \lambda_t\boldsymbol{\alpha}_t^T(\boldsymbol{\beta}_s - \boldsymbol{\beta}_t) \le 0 \forall t,s \in \{1,2,\cdots,T\}.
    \end{equation}
    \item A monotone and concave utility function that satisfies~\eqref{eq:ISAC_AB_Objective} is given by:
    \begin{equation}\label{eq:reconUtil}
       \hat{U}(\boldsymbol{\beta}) = \min_{t\in \{1,2,\cdots,T\}}\{U_t + \lambda_t\boldsymbol{\alpha}_t^T(\boldsymbol{\beta} - \boldsymbol{\beta}_t)\}.
    \end{equation}
\end{enumerate}\label{Th:Afriats}
\end{theorem}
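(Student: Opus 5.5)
The natural route is the cycle of implications (3)$\Rightarrow$(1)$\Rightarrow$(2)$\Rightarrow$(3). Throughout, I read ``satisfies \eqref{eq:ISAC_AB_Objective}'' in the fixed-utility sense: for each $t$, $\boldsymbol{\beta}_t$ maximizes $U$ over the budget set $\{\boldsymbol{\beta}:\boldsymbol{\alpha}_t^T\boldsymbol{\beta}\le 1\}$. I also assume the budget binds, $\boldsymbol{\alpha}_t^T\boldsymbol{\beta}_t=1$. This is the standard Afriat setting, and it is forced by strict monotonicity whenever $\boldsymbol{\alpha}_t$ has positive entries, which holds here because $\alpha_{t,m}$ is a trace of an inverse covariance.

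\textbf{(2)$\Rightarrow$(3).} Given a feasible $\{U_t,\lambda_t\}$, define $\hat U$ by \eqref{eq:reconUtil}.
\begin{itemize}
\item $\hat U$ is a pointwise minimum of affine functions, so it is concave. Each slope vector $\lambda_t\boldsymbol{\alpha}_t$ has positive entries, so $\hat U$ is monotonically increasing.
\item Evaluating at an observed point gives $\hat U(\boldsymbol{\beta}_s)\le U_s$ by taking $t=s$. The Afriat inequalities \eqref{eq:AfrIneq} state exactly that every other term in the minimum is at least $U_s$. Hence $\hat U(\boldsymbol{\beta}_s)=U_s$.
\item For optimality, take any $\boldsymbol{\beta}$ with $\boldsymbol{\alpha}_s^T\boldsymbol{\beta}\le 1=\boldsymbol{\alpha}_s^T\boldsymbol{\beta}_s$. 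Then $\hat U(\boldsymbol{\beta})\le U_s+\lambda_s\boldsymbol{\alpha}_s^T(\boldsymbol{\beta}-\boldsymbol{\beta}_s)\le U_s=\hat U(\boldsymbol{\beta}_s)$.
\end{itemize}
So $\boldsymbol{\beta}_s$ solves the RSU problem under $\hat U$.

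\textbf{(3)$\Rightarrow$(1).} This is immediate: $\hat U$ is itself a monotone concave utility rationalizing the data.

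\textbf{(1)$\Rightarrow$(2).} This is the main step. Suppose a concave, monotone $U$ rationalizes the data, and first assume $U$ is differentiable.
\begin{itemize}
\item Optimality of $\boldsymbol{\beta}_t$ under the linear constraint gives the KKT condition $\nabla U(\boldsymbol{\beta}_t)=\lambda_t\boldsymbol{\alpha}_t$ with $\lambda_t\ge 0$. Allowing boundary complementary slackness, one uses only $\nabla U(\boldsymbol{\beta}_t)\le\lambda_t\boldsymbol{\alpha}_t$ in the appropriate sense.
\item Set $U_t=U(\boldsymbol{\beta}_t)$.
\item Concavity gives $U_s\le U_t+\nabla U(\boldsymbol{\beta}_t)^T(\boldsymbol{\beta}_s-\boldsymbol{\beta}_t)$, which yields \eqref{eq:AfrIneq} directly.
\end{itemize}
Without differentiability I would use a supergradient instead. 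The existence of a supergradient of the form $\lambda_t\boldsymbol{\alpha}_t$ follows from a separating-hyperplane argument: because $\boldsymbol{\beta}_t$ is optimal, the convex upper-level set $\{\boldsymbol{\beta}:U(\boldsymbol{\beta})\ge U_t\}$ (or its strict version) does not meet the interior of the budget half-space. Separating the two and using concavity then produces the needed supergradient inequality.

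\textbf{Main obstacle.} The delicate point is getting $\lambda_t>0$ strictly. With local nonsatiation or strict monotonicity, $\lambda_t=0$ would make $\boldsymbol{\beta}_t$ a global maximizer, which is impossible. Otherwise, I would work with a Slater-type point, for example $\boldsymbol{\beta}=0$, which strictly satisfies $\boldsymbol{\alpha}_t^T\boldsymbol{\beta}<1$, to ensure the separating multiplier is nonzero. These regularity assumptions (strict monotonicity, binding budget, positive $\boldsymbol{\alpha}_t$) are implicit in the classical statement of \cite{AfriatsThm,varian1982nonparametric}, and I would make them explicit there.
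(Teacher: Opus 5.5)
The paper gives no proof of this theorem; it quotes it from Afriat and Varian. Your (2)$\Rightarrow$(3)$\Rightarrow$(1) is exactly the classical construction: the lower envelope of the affine supports, $\hat U(\boldsymbol{\beta}_s)=U_s$ read off from \eqref{eq:AfrIneq}, and optimality on the binding budget line. Your weak reading, that $\boldsymbol{\beta}_t$ is \emph{a} maximizer, is the right one. The piecewise-linear $\hat U$ generally has non-unique maximizers; for $T=1$ the whole budget hyperplane is optimal.

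The genuine difference is in (1)$\Rightarrow$(2). The classical route is (1)$\Rightarrow$GARP$\Rightarrow$Afriat numbers, with a combinatorial step in the middle. That detour is needed there because item~1 assumes only a locally nonsatiated, not necessarily concave, utility. It buys the stronger conclusion that concavity adds no testable restrictions, together with the finite GARP test. Here item~1 already assumes concavity, so your direct KKT/supergradient argument is legitimate and much shorter. It exhibits $\lambda_t$ as the shadow price of the budget and proves exactly the weaker equivalence that is stated.

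Your added hypotheses are genuinely needed, not cosmetic. Without nonsatiation the equivalence is false. A constant $U$ is concave, weakly increasing, and rationalizes $\boldsymbol{\alpha}_1=(1,0.1)^T$, $\boldsymbol{\beta}_1=(1,0)^T$, $\boldsymbol{\alpha}_2=(0.1,1)^T$, $\boldsymbol{\beta}_2=(0,1)^T$. Yet \eqref{eq:AfrIneq} forces $U_2\le U_1-0.9\lambda_1$ and $U_1\le U_2-0.9\lambda_2$, which is incompatible with $\lambda_1,\lambda_2>0$. Likewise, for (2)$\Rightarrow$(3) the binding budget must be imposed on the data, since (2) is vacuous when $T=1$.

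Two points need tightening.

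First, the Slater fallback in your last paragraph does not work. A strictly feasible point only guarantees that some multiplier $\lambda_t\ge 0$ exists. It cannot give $\lambda_t>0$, and the example above shows that $\lambda_t=0$ is forced without nonsatiation.

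Second, in the nonsmooth case the separation step is vacuous. A hyperplane separating $\{U\ge U_t\}$ from the budget half-space through $\boldsymbol{\beta}_t$ must be the budget hyperplane itself, which merely restates optimality. The real content is the convex-analysis fact that a vector supporting an upper level set of a concave $U$ at a non-maximizing point is (up to closure) a positive multiple of a supergradient. That fact presupposes $\partial U(\boldsymbol{\beta}_t)\neq\emptyset$, which is automatic only in the interior of the domain and needs separate justification once $\boldsymbol{\beta}\ge 0$ is imposed.

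A cleaner route avoids both issues. Set $v_t(c)=\sup\{U(\boldsymbol{\beta}):\boldsymbol{\alpha}_t^T\boldsymbol{\beta}\le c\}$. This function is concave and nondecreasing, with $v_t(1)=U_t$. It is finite near $c=1$ by concavity, since $v_t(0)\ge U(\boldsymbol{0})>-\infty$, so it has a supergradient $\lambda_t\ge 0$ at $c=1$. Strict monotonicity gives $v_t(c)>v_t(1)$ for $c>1$, hence $\lambda_t>0$. Using $\boldsymbol{\alpha}_t^T\boldsymbol{\beta}_t=1$,
\[
U_s\le v_t(\boldsymbol{\alpha}_t^T\boldsymbol{\beta}_s)\le U_t+\lambda_t(\boldsymbol{\alpha}_t^T\boldsymbol{\beta}_s-1)=U_t+\lambda_t\boldsymbol{\alpha}_t^T(\boldsymbol{\beta}_s-\boldsymbol{\beta}_t).
\]
This needs neither differentiability nor any supergradient of $U$ itself. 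It also makes explicit the division of labour: the Slater-type condition yields the existence of $\lambda_t$, while nonsatiation yields its positivity.
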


Given the observations, \(\mathcal{D}_{1:T} = \{(\boldsymbol{\alpha}_t,\boldsymbol{\beta}_t)\}_{t=1}^T\), corresponding to a given RSU, our non-parametric approach for inverse learning the utility, ATIL, given in Algorithm~\ref{ATIL_Alg}, directly follows Theorem~\ref{Th:Afriats}. The fixed utility estimate, \(\hat{U}(\boldsymbol{\beta})\), learned using ATIL can be used to design predictive V2I communication protocols for vehicular networks. 

However, ATIL is a non-parametric learning approach, and the entire training dataset must be available to perform inference.
Therefore, Sec.~\ref{subsec:FICNNIL/PICNNIL_Parametric} presents parametric IL approaches, where, unlike ATIL, the ISAC RSU utility is modeled using neural networks and the parameters of the neural network are learned from the available observations \(\mathcal{D}_{1:t}\). However, the challenge lies in incorporating concavity and monotonicity in neural network and training the neural network given the observations $\mathcal{D}_{1:T}$.

\subsection{Parametric Inverse Networks - FICNN and PICNN}\label{subsec:FICNNIL/PICNNIL_Parametric}

The parametric representation used to characterize the RSU utility must satisfy both concavity and monotonicity with respect to the decision variable \(\boldsymbol{\beta}\).
Two neural network models available in the literature that enforce these structural properties are:
\begin{inparaenum}[(i)]
    \item Fully Input Concave Neural Network (FICNN)~\cite{amos2017input_ICNN_Seminal,grzeskiewicz2025uncovering_ICNN_Utility_Max}: a neural network whose input is the RSU decision variable \(\boldsymbol{\beta}_t\), and whose scalar output is concave with respect to the input. Hence, it is suitable for modeling fixed RSU utility, \(U(\boldsymbol{\beta}_t)\);
    \item Partially Input Concave Neural Network (PICNN)~\cite{amos2017input_ICNN_Seminal}: a neural network whose inputs are the state \(\boldsymbol{\alpha}_t\) and the RSU decision variable \(\boldsymbol{\beta}_t\), and whose scalar output is concave with respect to \(\boldsymbol{\beta}_t\) alone. Hence, it is suitable for modeling state-dependent RSU utility, \(U(\boldsymbol{\alpha}_t,\boldsymbol{\beta}_t)\).
\end{inparaenum} The methods adopted in the literature for training FICNN and PICNN~\cite{amos2017input_ICNN_Seminal} mainly focus on reliably reconstructing ordinal preferences alone. Hence, in this paper, we design a bilevel training approach that can also achieve low-MSE predictions of the RSU decision variable \(\boldsymbol{\beta}_t\) corresponding to the state \(\boldsymbol{\alpha}_t\).
\begin{figure}[t]
    \centering
    \includegraphics[width=0.8\linewidth]{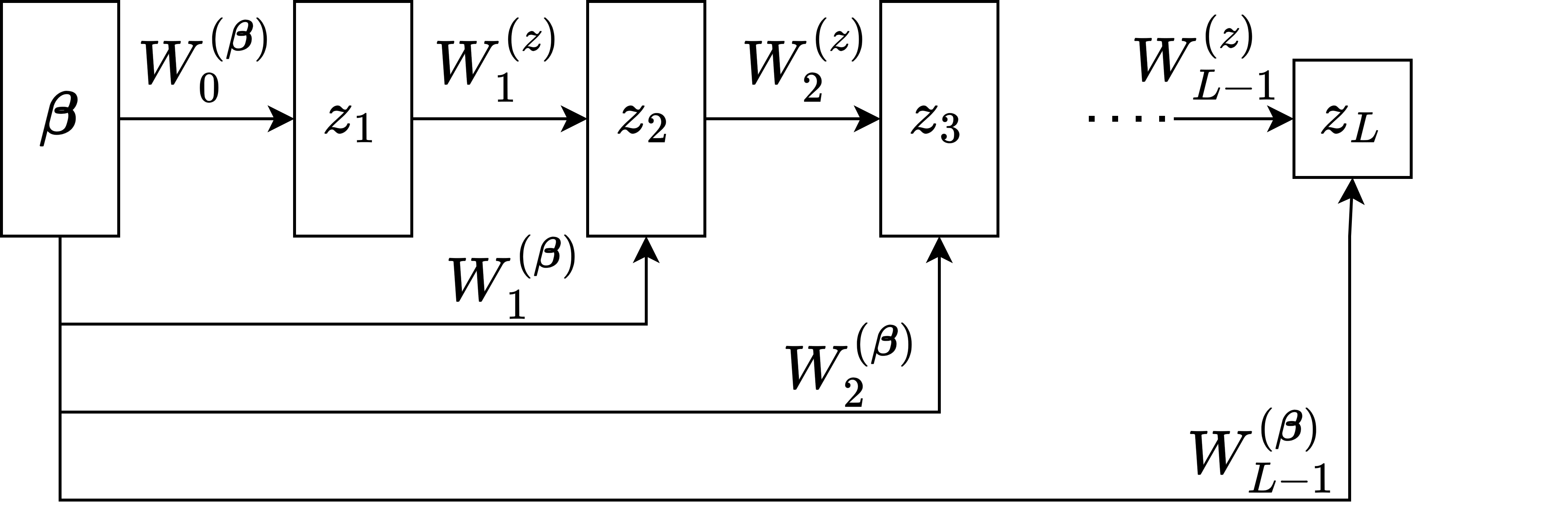}
\caption{ICNN architecture~\cite{amos2017input_ICNN_Seminal}, which takes the resource-allocation vector \(\boldsymbol{\beta}\) as input and outputs a scalar utility value \(U(\boldsymbol{\beta})\). The imposed constraints on the network weights, together with the choice of activation functions, ensure that \(U(\boldsymbol{\beta})\) is concave and monotonically increasing with respect to \(\boldsymbol{\beta}\).}
    \label{fig:ICNNArchi}
\end{figure}

In the remainder of this subsection, Sec.~\ref{subsec:ICNNArchitecture} presents the FICNN architecture together with the parameter constraints required to ensure concavity with respect to \(\boldsymbol{\beta}\). Sec.~\ref{subsec:PICNN_Archi} then presents the PICNN architecture and the associated parameter constraints required to guarantee concavity in \(\boldsymbol{\beta}\). Finally, Sec.~\ref{subsec:ICNNTraining} describes the bilevel optimization framework used to train both FICNN and PICNN based on revealed preference principles, and presents the corresponding parametric inverse learning algorithms, namely FICNNIL and PICNNIL.

\subsubsection{Fully Input Concave Neural Network (FICNN) Architecture}
\label{subsec:ICNNArchitecture}
We consider a fully connected, fully input-concave neural network with \(L\) layers, where each hidden layer has \(M\) neurons, defined over the input \(\boldsymbol{\beta} \in \mathbb{R}_+^M\), as illustrated in Fig.~\ref{fig:ICNNArchi}. 
For layers indexed by \(l = 0,1,\dots,L-1\), the network is defined recursively as \cite{amos2017input_ICNN_Seminal}:
\begin{equation}
\label{eq:ICNNArchi}
z_{l+1}
=
h_l\!\left(
W_l^{(z)} z_l
+
W_l^{(\boldsymbol{\beta})} \boldsymbol{\beta}
+
b_l
\right),
\;
\widehat{U}(\boldsymbol{\beta};\theta)
=
z_L,
\end{equation}
where
\begin{itemize}
    \item \(z_l\) denotes the activation vector at layer \(l\), with initialization \(z_0 \equiv 0\);
    \item \(W_l^{(z)} \in \mathbb{R}_+^{M\times M}\) and \(W_l^{(\boldsymbol{\beta})}\in \mathbb{R}^{M\times M}\) are weight matrices. For the last layer, \(W_{L-1}^{(z)} \in \mathbb{R}_+^{1\times M}\) and \(W_{L-1}^{(\boldsymbol{\beta})}\in \mathbb{R}^{1\times M}\);
    \item \(b_l \in \mathbb{R}^{M}\) are bias vectors. For the last layer, \(b_{L-1} \in \mathbb{R}\);
    \item \(h_l(\cdot)\) denotes an element-wise nonlinear activation function; and
    \item \(\theta = \{W_{0:L-1}^{(z)}, W_{0:L-1}^{(\boldsymbol{\beta})}, b_{0:L-1}\}\) denotes the collection of all trainable parameters.
\end{itemize}

\begin{proposition}\label{prop:Proposition_Concave}
The function \(\widehat{U}(\boldsymbol{\beta};\theta)\) in \eqref{eq:ICNNArchi} is concave in \(\boldsymbol{\beta}\) if the weight matrices \(W_l^{(z)}\) are element-wise non-negative for all \(l = 0,\dots,L-1\), and the activation functions \(h_l\) are concave and non-decreasing \cite{amos2017input_ICNN_Seminal,grzeskiewicz2025uncovering_ICNN_Utility_Max}.
\end{proposition}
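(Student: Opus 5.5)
The plan is induction on the layer index $l$, with the claim that every coordinate of $z_l$, viewed as a function of $\boldsymbol{\beta}$, is concave. Two standard facts from convex analysis do the work.
\begin{enumerate}
\item[(a)] A nonnegative combination of concave functions plus an affine function is concave.
\item[(b)] If $g:\mathbb{R}\to\mathbb{R}$ is concave and non-decreasing and $f$ is concave, then $g\circ f$ is concave.
\end{enumerate}
Fact (b) follows in one line. Concavity of $f$ gives $f(\theta x+(1-\theta)y)\ge \theta f(x)+(1-\theta)f(y)$. Monotonicity of $g$ then gives $g(f(\theta x+(1-\theta)y))\ge g(\theta f(x)+(1-\theta)f(y))$, and concavity of $g$ bounds the right side below by $\theta g(f(x))+(1-\theta)g(f(y))$.

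\emph{Base case.} Since $z_0\equiv 0$, each coordinate of $z_0$ is constant and hence concave (indeed affine) in $\boldsymbol{\beta}$. Equivalently, the pre-activation $W_0^{(\boldsymbol{\beta})}\boldsymbol{\beta}+b_0$ is affine in $\boldsymbol{\beta}$, and no sign condition on $W_0^{(\boldsymbol{\beta})}$ is needed.

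\emph{Inductive step.} Assume every coordinate $[z_l]_j$ is concave in $\boldsymbol{\beta}$. The $i$-th pre-activation is
\[
[W_l^{(z)} z_l + W_l^{(\boldsymbol{\beta})}\boldsymbol{\beta} + b_l]_i=\sum_j [W_l^{(z)}]_{ij}[z_l]_j + \big(\text{affine in }\boldsymbol{\beta}\big).
\]
Because $[W_l^{(z)}]_{ij}\ge 0$ elementwise, fact (a) shows this pre-activation is concave in $\boldsymbol{\beta}$. The activation $h_l$ acts elementwise, so $[z_{l+1}]_i=h_l(\cdot)$ is a scalar concave non-decreasing function applied to a concave function, and fact (b) makes it concave.

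\emph{Conclusion.} Iterating up to $l=L-1$ shows that $\widehat U(\boldsymbol{\beta};\theta)=z_L$ is concave. The last layer has the same structure, with $W_{L-1}^{(z)}\in\mathbb{R}_+^{1\times M}$, so no separate treatment is needed.

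The only delicate point is the sign condition. Both facts are essential: a negative entry of $W_l^{(z)}$ would turn a concave summand convex, and a decreasing $h_l$ would break fact (b). The weights $W_l^{(\boldsymbol{\beta})}$ are unrestricted because they multiply the affine input directly, and affine functions are concave under either sign. I would remark that if the final layer's activation is the identity, it is trivially concave and non-decreasing, so it is covered by the same argument.
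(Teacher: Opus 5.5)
Your proof is correct and follows the same route as the paper, which only asserts that the result follows from non-negative sums of concave functions being concave (citing Amos et al.). Your layer-wise induction fills in that sketch, and in particular states the composition rule for a concave non-decreasing $h_l$ applied to a concave pre-activation, which the paper's one-line justification leaves implicit.
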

The proof of Proposition~\ref{prop:Proposition_Concave} follows from the fact that non-negative sums of concave functions are also concave \cite{amos2017input_ICNN_Seminal}. 
The choice of non-linear concave activation functions \(h_l(\cdot)\) are:
\paragraph*{Concave-tanh}
Adapting the hyperbolic tangent, the `concave-tanh' activation function is given by\cite{grzeskiewicz2025uncovering_ICNN_Utility_Max}
\begin{equation}
h(x) =
\begin{cases}
\tanh(x), & x \ge 0, \\
x, & x < 0.
\end{cases}
\label{eq:concave_tanh}
\end{equation}

\paragraph*{Concave-sigmoid}
Adapting the sigmoid function, the `concave-sigmoid' activation function is given by\cite{grzeskiewicz2025uncovering_ICNN_Utility_Max}
\begin{equation}
h(x) =
\begin{cases}
\dfrac{1}{1+e^{-x}}, & x \ge 0, \\[6pt]
\dfrac{1}{4}x + \dfrac{1}{2}, & x < 0.
\end{cases}
\label{eq:concave_sigmoid}
\end{equation}

\paragraph*{Concave-log}
Adapting the natural logarithm, the `concave-log' activation function (with $\delta>0$ to avoid the asymptote at $x=0$) is given by\cite{grzeskiewicz2025uncovering_ICNN_Utility_Max}
\begin{equation}
h(x) =
\begin{cases}
\ln(x + \delta), & x > 0, \\[6pt]
\dfrac{1}{\delta}x + \ln \delta, & x \le 0.
\end{cases}
\label{eq:concave_log}
\end{equation}As in \cite{grzeskiewicz2025uncovering_ICNN_Utility_Max}, we use an FICNN with \(3\) hidden layers and \(M\) neurons per hidden layer, with \textit{concave-tanh} activation function to characterize the RSU utility. 

Although an FICNN can capture the concavity and monotonicity of the RSU utility with respect to \(\boldsymbol{\beta}_t\), it cannot explicitly capture the dependence of the utility on \(\boldsymbol{\alpha}_t\) in a state-dependent utility setting.

\subsubsection{Partially Input Concave Neural Network (PICNN) Architecture}\label{subsec:PICNN_Archi}
A PICNN~\cite{amos2017input_ICNN_Seminal} shown in Fig.~\ref{fig:PICNN_archi}, which combines a fully input-concave neural network that captures concavity and monotonicity with respect to \(\boldsymbol{\beta}\) with a conventional feedforward neural network that captures the dependence of the utility on \(\boldsymbol{\alpha}\), is better suited for characterizing RSU utility when the utility is state-dependent.
\begin{figure}[t]
    \centering
    \includegraphics[width=0.8\linewidth]{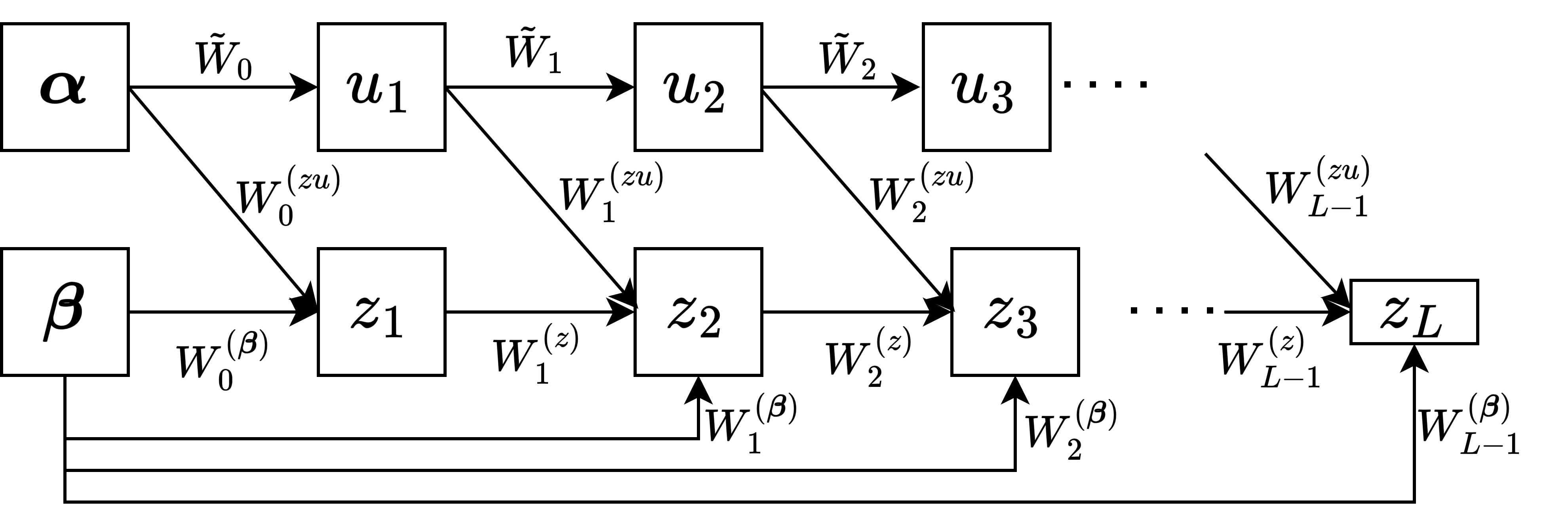}
\caption{PICNN architecture~\cite{amos2017input_ICNN_Seminal} for learning a state-dependent utility function \(U(\boldsymbol{\alpha},\boldsymbol{\beta})\). The network models dependence on the state vector \(\boldsymbol{\alpha}\), while maintaining concavity and monotonicity with respect to the resource-allocation vector \(\boldsymbol{\beta}\) through suitable constraints on the weights and activation functions.}

    \label{fig:PICNN_archi}
\end{figure}

Similar to FICNN in \eqref{eq:ICNNArchi}, PICNN is also a fully connected scalar valued neural network with \(L\) layers, defined over the inputs \(\boldsymbol{\alpha},\boldsymbol{\beta} \in \mathbb{R}_+^M\), as illustrated in Fig.~\ref{fig:PICNN_archi}. 
For layers indexed by \(l = 0,1,\dots,L-1\), the network is defined recursively as \cite{amos2017input_ICNN_Seminal}:
    \begin{align}
        u_{l+1} &= \tilde{h}_l(\tilde{W}_lu_l + \tilde{b}_l)\notag\\
        z_{l+1} &= h_l\left[ W_l^{(z)}\left(z_l \circ [W_l^{(zu)}u_l + b_l^{(z)}]\right)\right. +\\
        &\left. W_l^{(\boldsymbol{\beta})}\left(\boldsymbol{\beta}\circ(W_l^{(\boldsymbol{\beta}u)}u_l + b_l^{(\boldsymbol{\beta})})\right) + W_l^{(u)} + b_l \right],\notag\\
        \widehat{U}(\boldsymbol{\alpha},\boldsymbol{\beta};\theta) &= z_L,\quad u_0 = \boldsymbol{\alpha}.\notag\label{eq:PICNN_Archi}
    \end{align}
where
\begin{itemize}
    \item \(u_l\) and \(z_l\) denote the activation vectors of the \(u\)-path and \(z\)-path, respectively, at layer \(l\), with initialization \(u_0=\boldsymbol{\alpha}\) and \(z_0\equiv 0\). In \eqref{eq:PICNN_Archi}, \(\circ\) represents the Hadamard product, namely, elementwise multiplication between vectors of the same dimension.
    \item \(\tilde{W}_l \in \mathbb{R}^{M \times M}\) and \(\tilde{b}_l \in \mathbb{R}^{M}\) denote, respectively, the weight matrix and bias vector associated with the \(u\)-path,
    \item \(W_l^{(z)}\in \mathbb{R}_+^{M\times M}\), \(W_l^{(zu)} \in \mathbb{R}^{M\times M}\), \(W_l^{(\boldsymbol{\beta})}\in \mathbb{R}^{M\times M}\), \(W_l^{(\boldsymbol{\beta}u)}\in \mathbb{R}^{M\times M}\), and \(W_l^{(u)}\in \mathbb{R}^{M\times M}\) denote the trainable weight matrices (or vectors, as dictated by dimensional consistency) associated with the corresponding interactions in the PICNN architecture,
    \item \(b_l^{(z)}\in \mathbb{R}^{M}\), \(b_l^{(\boldsymbol{\beta})}\in \mathbb{R}^M\), and \(b_l\in \mathbb{R}^M\) denote the corresponding bias vectors,
    \item \(\tilde{h}_l(\cdot)\) and \(h_l(\cdot)\) denote element-wise nonlinear activation functions in the \(u\)-path and \(z\)-path, respectively, and
    \item \(\theta\) denotes the collection of all trainable parameters of the PICNN, i.e.,
    \begin{align}
\theta = \Big\{&
\tilde{W}_{0:L-1}, \tilde{b}_{0:L-1},
W_{0:L-1}^{(z)}, W_{0:L-1}^{(zu)},
W_{0:L-1}^{(\boldsymbol{\beta})},\notag\\
& W_{0:L-1}^{(\boldsymbol{\beta}u)}, 
W_{0:L-1}^{(u)},
b_{0:L-1}^{(z)}, b_{0:L-1}^{(\boldsymbol{\beta})}, b_{0:L-1}
\Big\}.
\end{align}
\end{itemize}

\begin{proposition}\label{propo:PICNN}
A PICNN with \(L\) layers can represent a FICNN (\eqref{eq:ICNNArchi}) with \(L\) layers, concave in \(\boldsymbol{\beta}\).
\end{proposition}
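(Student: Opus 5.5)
The plan is to show the claim by explicit parameter choice: given any FICNN \eqref{eq:ICNNArchi} with parameters \(\{W_l^{(z)},W_l^{(\boldsymbol{\beta})},b_l\}\), we will choose the PICNN parameters so that, for every input \(\boldsymbol{\alpha}\), the \(z\)-path of the PICNN reproduces the FICNN recursion layer by layer. The idea is to make the multiplicative gates in the \(z\)-path identically equal to the all-ones vector. We will also switch off every contribution coming from the \(u\)-path.

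Concretely, at each layer \(l\) we set \(W_l^{(zu)}=0\), \(b_l^{(z)}=\mathbf{1}\), \(W_l^{(\boldsymbol{\beta}u)}=0\), \(b_l^{(\boldsymbol{\beta})}=\mathbf{1}\), and \(W_l^{(u)}=0\). Here we interpret the additive \(u\)-term as \(W_l^{(u)}u_l\), which is evidently intended. The parameters \(\tilde W_l,\tilde b_l,\tilde h_l\) of the \(u\)-path may be arbitrary, since they no longer influence \(z\). We then copy \(W_l^{(z)}\), \(W_l^{(\boldsymbol{\beta})}\), \(b_l\) and \(h_l\) from the FICNN. With these choices the gates become \(z_l\circ\mathbf{1}=z_l\) and \(\boldsymbol{\beta}\circ\mathbf{1}=\boldsymbol{\beta}\), so the PICNN update reduces to
\begin{equation*}
z_{l+1}=h_l\big(W_l^{(z)}z_l+W_l^{(\boldsymbol{\beta})}\boldsymbol{\beta}+b_l\big).
\end{equation*}
This is exactly \eqref{eq:ICNNArchi}. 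A straightforward induction on \(l\), starting from the common initialization \(z_0\equiv 0\), then gives that the PICNN \(z_l\) equals the FICNN \(z_l\) for all \(l\). In particular \(\widehat U(\boldsymbol{\alpha},\boldsymbol{\beta};\theta)=\widehat U(\boldsymbol{\beta};\theta_{\text{FICNN}})\) for every \(\boldsymbol{\alpha}\).

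Concavity in \(\boldsymbol{\beta}\) follows by invoking Proposition~\ref{prop:Proposition_Concave}. The copied \(W_l^{(z)}\) are elementwise nonnegative and lie in the admissible set \(\mathbb{R}_+^{M\times M}\) of the PICNN, and the \(h_l\) are concave and non-decreasing. The represented function is therefore concave in \(\boldsymbol{\beta}\), and the chosen parameters respect the PICNN's own constraints.

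The main obstacle is bookkeeping rather than mathematics. The dimensions must be consistent: the gate vectors must match the sizes of \(z_l\) and \(\boldsymbol{\beta}\), and the last layer is scalar-valued. At \(l=0\), \(z_0=0\), so the \(z\)-gate is irrelevant there. We should also check that no positivity constraint of the PICNN forbids the choice \(b_l^{(z)}=\mathbf{1}\); if the gate is required to be nonnegative, the choice \(\mathbf{1}\) satisfies that requirement. Finally, we must be careful with the ambiguous term \(W_l^{(u)}\) in the displayed recursion. If it is literally a constant, we set it to zero or absorb it into \(b_l\).
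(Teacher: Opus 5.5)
Your proposal is correct and follows the same route as the paper, whose proof also sets the $u$-path weights to zero and chooses the gate biases so that the multiplicative gates become constants, which reduces the PICNN recursion to the FICNN recursion \eqref{eq:ICNNArchi}. Your version makes that argument explicit: gates equal to $\mathbf{1}$, copied $W_l^{(z)}\ge 0$ and $h_l$, induction from $z_0\equiv 0$, and concavity via Proposition~\ref{prop:Proposition_Concave}.
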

The proof of Proposition~\ref{propo:PICNN} follows directly from \cite{amos2017input_ICNN_Seminal}:
\begin{proof}
To recover an FICNN as a special case of the PICNN architecture, it is sufficient to eliminate the influence of the \(u\)-path by setting the corresponding weights to zero and choosing the associated bias terms so that the multiplicative gates reduce to constants. Under this choice, the PICNN reduces to an FICNN defined only in terms of \(\boldsymbol{\beta}\)\cite{amos2017input_ICNN_Seminal}.

\end{proof}
The nonlinear activation function \(h_l(\cdot)\) in the \(z\)-path of~\eqref{eq:PICNN_Archi}, analogous to the FICNN case, can be selected from the class of concave activation functions given in~\eqref{eq:concave_tanh}--\eqref{eq:concave_log}. In this work, we employ a PICNN with \(3\) hidden layers and \(M\) neurons in each hidden layer of both the \(u\)-path and the \(z\)-path. Further, the activation function \(h_l(\cdot)\) in~\eqref{eq:PICNN_Archi} is chosen as the \emph{concave-tanh} function.

\subsection{Parametric Inverse Learning FICNNIL and PICCNIL: Training Framework}\label{subsec:ICNNTraining}
Although the FICNN architecture in~\eqref{eq:ICNNArchi} and the PICNN architecture in~\eqref{eq:PICNN_Archi} are structurally different, their training procedures are conceptually similar. In this subsection, we describe a common training framework for both models, designed to enforce structural consistency with the constrained utility-maximizing behavior characterized by~\eqref{eq:ISAC_AB_Objective} using the revealed preference observations \(\mathcal{D}_{1:T}\) in~\eqref{eq:DataSet}.

Specifically, the training procedure enforces that the observed beam allocation decisions \(\boldsymbol{\beta}_t\), for \(t=1,\dots,T\), be consistent with the first-order optimality conditions associated with utility maximization over the corresponding feasible set defined by the resource constraint \(\boldsymbol{\alpha}_t^{\top}\boldsymbol{\beta}\le 1\) in~\eqref{eq:ISAC_AB_Objective}. For the FICNN model, the learned utility depends only on the action variable, i.e., \(U(\boldsymbol{\beta})\), whereas for the PICNN model, the learned utility is state-dependent and is represented as \(U(\boldsymbol{\alpha},\boldsymbol{\beta})\). Nevertheless, in both cases, the training objective is to learn model parameters such that the predicted optimizer induced by the learned utility matches the observed behavior.

{\sl \textbf{The proposed bilevel training framework:} }
To estimate the FICNN/PICNN parameters \(\theta\), we adopt a bilevel optimization
framework inspired by~\cite{grzeskiewicz2025uncovering_ICNN_Utility_Max,amos2017input_ICNN_Seminal}.
PEARL~\cite{grzeskiewicz2025uncovering_ICNN_Utility_Max} is well suited to
econometric settings in which expenditures vary across observations and the
money-metric expenditure carries informative revealed-preference signal.
In our normalized ISAC RSU allocation problem, however, the budget constraint is
active and the expenditure is fixed, i.e.,
\(\boldsymbol{\alpha}^{T}\boldsymbol{\beta}^{\star}=1\), so matching expenditure
provides little information about the component-wise allocation vector.
Since multiple allocations can satisfy the same unit expenditure while inducing
different beam splitting outcomes, the proposed bilevel ICNN
training directly minimizes the predicted--observed allocation error and is
therefore better aligned with component level reconstruction of \(\boldsymbol{\beta}\). See Table~\ref{tab:il_comparison_t20_t200} in Sec.~\ref{sec:NumRes} for a comparison between PEARL and the proposed bilevel training framework in ISAC-RSU context.

For each revealed preference observation \((\boldsymbol{\alpha}_t,\boldsymbol{\beta}_t)\in\mathcal{D}_{1:T}\), the learning problem is formulated as follows.
For the FICNN model, the bilevel problem is given by
\begin{align}
    \textbf{Inner problem:}\;&
    \boldsymbol{\beta}_{\theta}(\boldsymbol{\alpha}_t)
    =
    \arg\max_{\boldsymbol{\alpha}_t^{\top}\boldsymbol{\beta}\le 1}
    \widehat{U}(\boldsymbol{\beta};\theta),
    \label{eq:BilevelProblemFICNN}\\
    \textbf{Outer problem:}\;&
    \theta
    =
    \arg\min_{\theta}\mathcal{L}_{\theta}
    =
    \sum_{\forall t}
    \left\|
    \boldsymbol{\beta}_{\theta}(\boldsymbol{\alpha}_t)-\boldsymbol{\beta}_t
    \right\|^2.
    \notag
\end{align}

For the PICNN model, the corresponding bilevel problem is
\begin{align}
    \textbf{Inner problem:}\;&
    \boldsymbol{\beta}_{\theta}(\boldsymbol{\alpha}_t)
    =
    \arg\max_{\boldsymbol{\alpha}_t^{\top}\boldsymbol{\beta}\le 1}
    \widehat{U}(\boldsymbol{\alpha}_t,\boldsymbol{\beta};\theta),
    \label{eq:BilevelProblemPICNN_Common}\\
    \textbf{Outer problem:}\;&
    \theta
    =
    \arg\min_{\theta}\mathcal{L}_{\theta}
    =
    \sum_{\forall t}
    \left\|
    \boldsymbol{\beta}_{\theta}(\boldsymbol{\alpha}_t)-\boldsymbol{\beta}_t
    \right\|^2.
    \notag
\end{align}
The inner problem computes the optimal beam splitting vector predicted by the FICNN/PICNN utility model under the resource constraint \(\boldsymbol{\alpha}_t^\top\boldsymbol{\beta}\le 1\), while the outer problem updates the corresponding model parameters so that the predicted RSU actions match the observed revealed preference data.
\begin{algorithm}[t]
{\fontsize{9pt}{9pt}\selectfont
\caption{Differentiable projected gradient ascent algorithm for solving inner optimization problem}
\label{alg:InnerProblem_Common}
\begin{algorithmic}[1]

\REQUIRE $\mathcal{D}_{1:T} = \{(\boldsymbol{\alpha}_t, \boldsymbol{\beta}_t)\}_{t=1}^{T}$, 
FICNN/PICNN parameters $\theta$, inner steps $J$, step size $\eta_{in}$

\STATE Initialize $\{\boldsymbol{\beta}^{(0)}_t\}_{t=1}^{T}$

\FOR{$j=0$ to $J-1$}
    \FOR{$t=1$ to $T$}
        \STATE Compute utility gradient:
        \(
        \mathbf{g}^{(j)}_t 
        =
        \nabla_{\boldsymbol{\beta}} 
        \widehat{U}_t\!\left(\boldsymbol{\beta}^{(j)}_t;\theta\right)
        \)
        \STATE Gradient ascent:
        \(
        \tilde{\boldsymbol{\beta}}^{(j+1)}_t
        =
        \boldsymbol{\beta}^{(j)}_t 
        + 
        \eta_{in} \mathbf{g}^{(j)}_t
        \)
        \STATE Projection into feasible region:
        \[
        \boldsymbol{\beta}^{(j+1)}_t
        =
        \Pi_{\{\boldsymbol{\alpha}_t^\top \boldsymbol{\beta} \le 1,\ \boldsymbol{\beta}\ge0\}}
        \!\left(\tilde{\boldsymbol{\beta}}^{(j+1)}_t\right)
        \]
    \ENDFOR
\ENDFOR

\RETURN Approximate solutions of inner optimization problem:
\[
\boldsymbol{\beta}_{\theta}(\boldsymbol{\alpha}_t)
=
\boldsymbol{\beta}^{(J)}_t,
\quad t=1,\dots,T
\]
\end{algorithmic}}
\end{algorithm}

{\sl \textbf{Solving the outer problem:} }
The outer optimization problem can be solved via classical regression, where the FICNN/PICNN parameters are learned through backpropagation which iteratively update \(\theta\) so as to minimize the outer-level loss function \(\mathcal{L}_{\theta}\) defined in~\eqref{eq:BilevelProblemFICNN} or~\eqref{eq:BilevelProblemPICNN_Common}. The total gradient of the loss with respect to \(\theta\) is given by
\begin{equation}
\label{eq:LossGradientTotal_Common}
\nabla_{\theta}\mathcal{L}_{\theta}
=
\frac{1}{T}\sum_{t=1}^{T}
\frac{\partial \mathcal{L}_{\theta}}{\partial \boldsymbol{\beta}_{\theta}(\boldsymbol{\alpha}_t)}
\;
\frac{\partial \boldsymbol{\beta}_{\theta}(\boldsymbol{\alpha}_t)}{\partial \theta},
\end{equation}
where \(\boldsymbol{\beta}_{\theta}(\boldsymbol{\alpha}_t)\) denotes the solution of the corresponding inner optimization problem. For the PICNN case, although the gradient is taken with respect to \(\theta\), it depends implicitly on \(\boldsymbol{\alpha}_t\) through the inner-level solution, since both the utility \(\widehat{U}(\boldsymbol{\alpha}_t,\boldsymbol{\beta};\theta)\) and the feasible set are parameterized by \(\boldsymbol{\alpha}_t\). However, no derivative with respect to \(\boldsymbol{\alpha}_t\) appears explicitly, since \(\boldsymbol{\alpha}_t\) is treated as observed data during training.

{\sl \textbf{Solving inner problem:} }Since the mapping \(\theta \mapsto \boldsymbol{\beta}_{\theta}(\boldsymbol{\alpha}_t)\) is implicitly defined through the inner maximization in~\eqref{eq:BilevelProblemFICNN} and~\eqref{eq:BilevelProblemPICNN_Common}, the gradient computation must differentiate through the inner optimization pipeline. To enable end-to-end differentiability, and following~\cite{grzeskiewicz2025uncovering_ICNN_Utility_Max}, we employ a differentiable projected gradient ascent scheme to solve the inner problem, as outlined in Algorithm~\ref{alg:InnerProblem_Common}. 
This allows the outer-level gradient to propagate through the iterative inner updates during training. 
Given an FICNN or PICNN representation of the utility, Algorithm~\ref{alg:InnerProblem_Common} iteratively computes the corresponding optimal \(\boldsymbol{\beta}\) by
\begin{inparaenum}[(i)]
    \item randomly initializing \(\boldsymbol{\beta}\);
    \item iteratively computing the gradient of the utility with respect to \(\boldsymbol{\beta}\) and updating \(\boldsymbol{\beta}\) using gradient ascent; and
    \item projecting \(\boldsymbol{\beta}\) back onto the feasible region prescribed by the constraints,
\end{inparaenum}
for a predefined number of iterations.

\begin{algorithm}[t]
{\fontsize{9pt}{9pt}\selectfont
\caption{FICNNIL/PICNNIL algorithm for inverse learning of RSU utility}
\label{alg:ICNNIL_Common}
\begin{algorithmic}[1]
\REQUIRE $\mathcal{D}_{1:T} = \{(\boldsymbol{\alpha}_t, \boldsymbol{\beta}_t)\}_{t=1}^{T}$, 
FICNN/PICNN network, number of iterations $I$, step size $\eta_{out}$

\STATE Initialize model parameters \(\theta\)

\FOR{iterations \(i = 0\) to \(I-1\)}
\STATE \(\forall t\), use Algorithm~\ref{alg:InnerProblem_Common} to solve
\[
\boldsymbol{\beta}_{\theta}(\boldsymbol{\alpha}_t)
=
\arg\max_{\boldsymbol{\alpha}_t^{\top}\boldsymbol{\beta}\le 1}
\widehat{U}_t(\boldsymbol{\beta};\theta)
\]

\STATE Compute batch outer loss:
\(
\mathcal{L}_{\theta}
=
\frac{1}{T}
\sum_{t=1}^{T}
\left\|
\boldsymbol{\beta}_{\theta}(\boldsymbol{\alpha}_t)
-
\boldsymbol{\beta}_t
\right\|^2
\)

\STATE Compute total derivative using~\eqref{eq:LossGradientTotal_Common}:
\[
\nabla_\theta \mathcal{L}_{\theta}
=
\frac{1}{T}
\sum_{t=1}^{T}
\frac{\partial \mathcal{L}_{\theta}}{\partial \boldsymbol{\beta}_{\theta}(\boldsymbol{\alpha}_t)}
\cdot
\frac{\partial \boldsymbol{\beta}_{\theta}(\boldsymbol{\alpha}_t)}{\partial \theta}
\]

\STATE Update parameters:
\(
\theta \leftarrow \theta - \eta_{out} \nabla_\theta \mathcal{L}_{\theta}
\)
\ENDFOR

\RETURN Updated \(\theta\), trained FICNN/PICNN
\[
\widehat{U}(\cdot;\theta):\boldsymbol{\beta}\rightarrow\mathbb{R}
\quad \text{or} \quad
\widehat{U}(\cdot,\cdot;\theta):(\boldsymbol{\alpha},\boldsymbol{\beta})\rightarrow\mathbb{R}
\]

\end{algorithmic}}
\end{algorithm}

Step~6 of Algorithm~\ref{alg:InnerProblem_Common} corresponds to the Euclidean projection of the intermediate iterate onto the convex feasible set
\[
\mathcal{C}_t
=
\left\{
\boldsymbol{\beta}\in\mathbb{R}_+^M
\;\middle|\;
\boldsymbol{\alpha}_t^\top\boldsymbol{\beta}\le 1
\right\},
\]
which can, in principle, be computed exactly by solving a convex projection problem. However, in our implementation, we use an approximate numerical procedure \cite{amos2017input_ICNN_Seminal,grzeskiewicz2025uncovering_ICNN_Utility_Max,parikh2014proximal} for computational simplicity. Specifically, the intermediate iterate, \(\tilde{\boldsymbol{\beta}}\) in step~5 is first clipped elementwise to enforce non-negativity,
\(
\bar{\boldsymbol{\beta}}
=
\left[
\tilde{\boldsymbol{\beta}}
\right]_+ .
\) Here \(\left[\cdot\right]_+\) represents elementwise non-negative clipping operation.
If the resulting vector violates the budget constraint, i.e.,
\(
\boldsymbol{\alpha}_t^\top\bar{\boldsymbol{\beta}}>1,
\)
it is corrected along the normal direction of the budget hyperplane
\(\boldsymbol{\alpha}_t^\top\boldsymbol{\beta}=1\) as
\[
\boldsymbol{\beta}'
=
\bar{\boldsymbol{\beta}}
-
\frac{
\boldsymbol{\alpha}_t^\top\bar{\boldsymbol{\beta}}-1
}{
\|\boldsymbol{\alpha}_t\|_2^2
}
\boldsymbol{\alpha}_t .
\]
Elementwise non-negativity is then enforced again if needed by setting
\(
\boldsymbol{\beta}^{+}
=
\left[
\boldsymbol{\beta}'
\right]_+ .
\)

Algorithm~\ref{alg:ICNNIL_Common} (FICNNIL/PICCNNIL) alternately solves the inner and outer optimization problems in \eqref{eq:BilevelProblemFICNN} or \eqref{eq:BilevelProblemPICNN_Common} in an iterative manner to learn the optimal FICNN/PICNN parameters that characterize the ISAC RSU behaviour consistent with \eqref{eq:ISAC_AB_Objective}.

\begin{figure}[t]
    \centering
    \includegraphics[width=\linewidth]{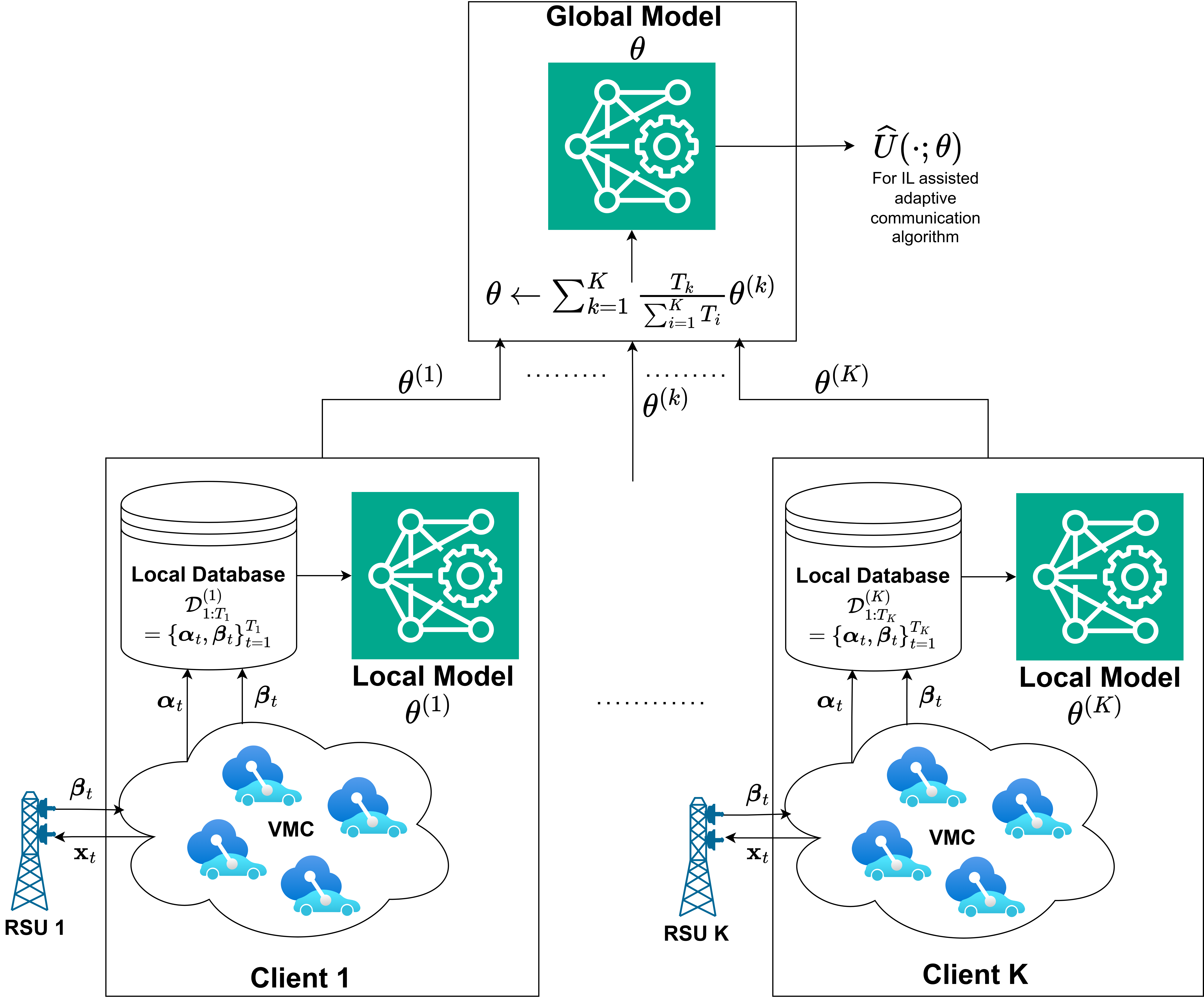}
    \caption{Illustration of federated averaging based inverse learning of the ISAC RSU utility. In the proposed framework, geographically separated VMCs (clients) observe different subsets of the revealed-preference dataset, represented as local observations \(\mathcal{D}^{(k)}_{1:T_k}=\{(\alpha_t,\beta_t)\}_{t=1}^{T_k}\), corresponding to the behaviour of respective serving RSUs.
    Individual VMC clients use the available local data to train client-specific local models \(\theta^{(k)}\) using either FICNNIL or PICNNIL. The resulting local parameter updates are communicated to a central server, which aggregates them through the federated averaging rule to obtain an updated global model \(\theta\). Repeating this procedure over multiple communication rounds yields the final global inverse learning model.}
    \label{fig:BT_ATIL_Fed}
\end{figure}
\subsection{Federated Inverse Learning: FedFICNNIL/FedPICNNIL}\label{subsec:FedInvLearning} 
When the utility is state-dependent, different RSUs deployed at geographically different locations, such as straight highways, junctions, and free roads, may perceive different levels of tracking uncertainty. Hence, the corresponding VMCs served by those RSUs may observe state-dependent behaviours that differ from one another. In such situations, efficient inverse learning requires data corresponding to different states of the world, i.e., the sensing accuracies along with the actions of all the RSUs. However, as shown in Fig.~\ref{fig:BT_ATIL_Fed}, the entire training data may not be available at a centralized cloud server, either due to privacy concerns among different VMCs or to avoid communication overhead when communication resources are scarce. For such decentralized settings, we propose federated inverse learning, where individual VMCs train parametric IL models locally using the available observations from their respective serving ISAC RSUs, and the corresponding parameter updates from all VMCs, referred to as clients in federated learning terminology, are aggregated to obtain the final model.
Since ATIL is non-parametric, it is not readily suitable for federated learning settings, but FICNNIL and PICNNIL are.

In this section, we extend the parametric IL algorithms, FICNNIL/PICNNIL in Algorithm~\ref{alg:ICNNIL_Common}, to a federated setting, as shown in Fig.~\ref{fig:BT_ATIL_Fed}. 
In contrast to the centralized learning paradigm discussed in Sec.~\ref{subsec:learningFW}, in the federated inverse learning framework illustrated in Fig.~\ref{fig:BT_ATIL_Fed}, geographically separated VMCs train local models, either FICNN or PICNN, using datasets of the form \eqref{eq:DataSet} that are locally available to them, i.e., the observations obtained from their respective serving RSUs. Subsequently, the VMCs communicate only model updates to a coordinating server, which aggregates them to construct a global model. 
In FedFICNNIL/FedPICNNIL, given in Algorithm~\ref{alg:Federated_ICNNIL}, each VMC acts as a client possessing a local revealed-preference dataset \(\mathcal{D}_{1:T_k}^{(k)}\). Each client uses Algorithm~\ref{alg:ICNNIL_Common} to update its local parameters over \(R\) communication rounds. In each communication round, the local model updates are shared with a coordinating server, and the global FICNN/PICNN model is collaboratively trained using the FedAvg algorithm~\cite{mcmahan2017FedAvg}.

\begin{algorithm}[t]
{\fontsize{9pt}{9pt}\selectfont
\caption{FedFICNNIL/FedPICNNIL algorithm for federated inverse learning of RSU utility using FedAvg}
\label{alg:Federated_ICNNIL}
\begin{algorithmic}[1]
\REQUIRE \(K\) VMC clients, local datasets \(\{\mathcal{D}_{1:T_k}^{(k)}\}_{k=1}^{K}\), FICNN/PICNN network, number of communication rounds \(R\), number of local iterations \(I_{\mathrm{loc}}\), outer step size \(\eta_{out}\)

\STATE Initialize global model parameters \(\theta^{(0)}\) at the server

\FOR{communication rounds \(r=0\) to \(R-1\)}
    \STATE Server broadcasts current global model \(\theta^{(r)}\) to all participating clients

    \FORALL{clients \(k=1,\dots,K\) \textbf{in parallel}}
        \STATE Initialize local parameters:
        \(
        \theta_k^{(r,0)} \gets \theta^{(r)}
        \)

        \STATE Use Algorithm~\ref{alg:ICNNIL_Common} over \(I_{\mathrm{loc}}\) iterations to update the local parameter of client \(k\) to get \(\theta_k^{(r,I_{\mathrm{loc}})}\).

        \STATE Client \(k\) sends updated local parameters \(\theta_k^{(r,I_{\mathrm{loc}})}\) to the server
    \ENDFOR

    \STATE Server aggregates local models using FedAvg\cite{mcmahan2017FedAvg}:
    \[
    \theta^{(r+1)}
    \gets
    \sum_{k=1}^{K}\frac{T_k}{\sum_{j=1}^{K}T_j}\,\theta_k^{(r,I_{\mathrm{loc}})}
    \]
\ENDFOR

\RETURN Global trained FICNN/PICNN model:
\[
\widehat{U}(\cdot;\theta^{(R)}):\boldsymbol{\beta}\rightarrow\mathbb{R}
\quad \text{or} \quad
\widehat{U}(\cdot,\cdot;\theta^{(R)}):(\boldsymbol{\alpha},\boldsymbol{\beta})\rightarrow\mathbb{R}
\]

\end{algorithmic}}
\end{algorithm}

\subsection{IL-assisted predictive scheduling for cooperative data downloading in VMC}\label{subsec:PredSchedule}
As an example use case, we first consider a cooperative data downloading framework in which mutually exclusive chunks of a common file, e.g., an HD map or a media file, are downloaded by different members of the VMC and subsequently reconstructed through V2V communication~\cite{survey2019coopDownlAdvant}.

To demonstrate inverse learning of RSU strategy, consider the ISAC RSU with adaptive beam forming capability\cite{Liu2023ISAC_VN_Chapter} as illustrated in Fig.~\ref{RSU_V2I}, working under a communication-centric CSP. The ISAC RSU tries to maximize the total V2I data rate, while ensuring reliable connectivity to all the vehicles in the VMC which it serves. 
Let \(R(W)\) and \(R(N)\) denote the maximum achievable data rates, i.e., the Shannon capacities, corresponding to widebeam and narrowbeam transmission, respectively. Further, for a beam splitting ratio of \(\beta_{t,m}\in [0,1]\) the effective narrowbeam data rate for vehicle \(m\) at time \(t\) is given by\cite{Liu2023ISAC_VN_Chapter}
\begin{equation}\label{eq:RNbeta}
    R(N,\beta_{t,m}) = P_A(\beta_{t,m})\,R(N),
\end{equation}
where
\begin{equation}\label{eq:PA}
    P_A(\beta_{t,m})
    =
    \operatorname{erf}\!\left(
    \sqrt{\beta_{t,m}/2}\;\delta_{t,m}\sigma_{\phi}
    \right)
\end{equation}denotes the beam alignment probability of the narrow beam, and \(\operatorname{erf}(\cdot)\) is the Gauss error function~\cite{Liu2023ISAC_VN_Chapter}. In~\eqref{eq:PA}, \(\delta_{t,m}\) and \(\sigma_\phi^2\) denote, respectively, the half-power beamwidth of the narrow beam and the noise variance in observing \(\phi\)~\cite{Liu2023ISAC_VN_Chapter}.

Given a beam-splitting ratio \(\beta_{t,m}\), the ISAC AB downlink data rate available for vehicle \(m\) in Fig.~\ref{RSU_V2I} is a weighted sum of two components, namely:
\begin{inparaenum}[(i)]
    \item the wide-beam data rate, \(\beta_{t,m}R(W)\); and
    \item the narrow-beam data rate, \((1-\beta_{t,m})R(N,\beta_{t,m})\), where \(R(N,\beta_{t,m})\) is given in \eqref{eq:RNbeta}.
\end{inparaenum}
By allocating \(\beta_{t,m}\) close to \(1\), the ISAC RSU can ensure highly reliable V2I data transfer at a low data rate, whereas choosing \(\beta_{t,m}\) close to \(0\) provides the highest possible V2I data throughput when the beam is perfectly aligned, but may result in packet loss when there is beam misalignment. However, the adaptive beam allocation by the ISAC RSU in \eqref{eq:ISAC_AB_Objective} depends on the RSU utility. A fixed utility may assign fixed priorities to reliability and maximum achievable throughput irrespective of the state \(\boldsymbol{\alpha}_t\), whereas, under a state-dependent utility, the priority between reliability and maximum achievable throughput is a function of \(\boldsymbol{\alpha}_t\), denoted by \(p(\boldsymbol{\alpha}_t)\in[0,1]\)~\cite{kuo2007utility_P_alpha,du2023v2i_isac_P_alpha}.

Specifically, the ISAC RSU is assumed to maximize a weighted sum-rate utility function~\cite{Liu2023ISAC_VN_Chapter}, given by
\begin{align}
    U(\boldsymbol{\alpha}_t,\boldsymbol{\beta}_t)
    =&
    \sum_{m=1}^{M}
    w_m
    \Big[
    \big(1-p(\boldsymbol{\alpha}_t)\big)\,\beta_{t,m}R(W)
    \notag\\
    &
    \;\;+
    p(\boldsymbol{\alpha}_t)
    (1-\beta_{t,m})R(N,\beta_{t,m})
    \Big].
    \label{eq:ISACABUtil}
\end{align}
The formulation in \eqref{eq:ISACABUtil}, accommodates both fixed and state dependent RSU utilities through the dependence of \(p(\boldsymbol{\alpha}_t)\) on the state \(\boldsymbol{\alpha}_t\). In particular,
\begin{equation}\label{eq:p_alpha}
    p(\boldsymbol{\alpha}_t)\equiv
    \begin{cases}
        p_0,\quad p_0\in[0,1], & \text{; fixed utility,}\\[0.5ex]
        f(\boldsymbol{\alpha}_t)\in[0,1], & \text{; state dependent utility,}
    \end{cases}
\end{equation}
where \(f(\cdot)\) is an arbitrary state-dependent mapping as used in utility based resource allocation\cite{kuo2007utility_P_alpha}. 
In general, the parameters \(w_m\), \(p(\boldsymbol{\alpha}_t)\), and \(\delta_{t,m}\), as well as the RSU utility, are unknown to the vehicles.

In the cooperative data downloading setting, an efficient V2I protocol should schedule the file chunks to be downloaded by different vehicles subject to the available data rate \(R_{t',m}\) offered by the serving RSU~\cite{electronics11223663PreCache}. However, such a strategy must be executed in a predictive manner, since download scheduling decisions need to be made in advance based on anticipated communication conditions and RSU behavior~\cite{electronics11223663PreCache,sohail2023routingProtoVanet}. Consequently, learning the RSU strategy is essential for optimal predictive scheduling. The inverse learning-assisted predictive scheduling framework for cooperative data downloading in VMC is presented in Algorithm~\ref{ATIL_Based_Pred_Scheduling}. Given the anticipated state \(\boldsymbol{\alpha}_{t'}\), the IL-assisted predictive scheduling algorithm (Algorithm~\ref{ATIL_Based_Pred_Scheduling}) uses the utility model already inverse-learned using ATIL, FICNNIL, PICNNIL, FedFICNNIL, or FedPICNNIL in \eqref{eq:ISAC_AB_Objective} to predict the corresponding RSU beam allocation \(\boldsymbol{\beta}_{t'}\). Based on the predicted beam-splitting ratio, the algorithm predicts the RSU-to-vehicle data rate available to individual vehicles in the VMC and schedules the data download accordingly.
\begin{algorithm}[t]
{\fontsize{9pt}{9pt}\selectfont
\caption{IL-assisted predictive scheduling for cooperative data downloading in VMC}\label{ATIL_Based_Pred_Scheduling}
\begin{algorithmic}[1]
\REQUIRE $\mathcal{D}_{1:T}$, $\boldsymbol{\alpha}_{t'}$, $R(N)$, $R(W)$, $\delta_{t',m}$, and $\sigma_{\phi}$

\STATE Learn the RSU utility model \(\widehat{U}\) using an IL algorithm, e.g., ATIL, FICNNIL, or PICNNIL
\STATE Predict the RSU beam allocation \(\boldsymbol{\beta}_{t'}\) corresponding to \(\boldsymbol{\alpha}_{t'}\):
\[
\boldsymbol{\beta}_{t'}
\gets
\arg\max_{(\boldsymbol{\alpha}_{t'})^T\boldsymbol{\beta}\le 1}
\widehat{U}_{t'}(\cdot)
\]
\FORALL{$m$}
\STATE Compute the predicted available data rate:
\[
R_{t',m}
\gets
\beta_{t',m}R(W)
+
(1-\beta_{t',m})R(N,\beta_{t',m})
\]
\STATE Schedule download of \(R_{t',m}\) bits per unit contact time
\ENDFOR
\end{algorithmic}}
\end{algorithm}

\subsection{IL-Assisted Dynamic Cluster Head Selection in VMC (ILDynamic)}\label{sec:ILDynamic}
The procedure outlined in Algorithm~\ref{ATIL_Based_Pred_Scheduling} can be readily extended to enable dynamic cluster head selection in the VMC framework, as presented in Algorithm~\ref{alg:ILDynamic}. Specifically, Step~5 of the algorithm is modified to identify the vehicle \(m\) that achieves the highest predicted communication data rate with the RSU, and to designate that vehicle as the cluster head. 
\begin{algorithm}[t]
{\fontsize{9pt}{9pt}\selectfont
\caption{IL-assisted dynamic cluster head selection in VMC (ILDynamic)}\label{alg:ILDynamic}
\begin{algorithmic}[1]
\REQUIRE $\mathcal{D}_{1:T}$, $\boldsymbol{\alpha}_{t'}$, $R(N)$, $R(W)$, $\delta_{t',m}$, and $\sigma_{\phi}$

\STATE Learn the RSU utility model \(\widehat{U}\) using an IL algorithm, e.g., ATIL, FICNNIL, or PICNNIL
\STATE Predict the RSU beam allocation \(\boldsymbol{\beta}_{t'}\) corresponding to \(\boldsymbol{\alpha}_{t'}\):
\[
\boldsymbol{\beta}_{t'}
\gets
\arg\max_{(\boldsymbol{\alpha}_{t'})^T\boldsymbol{\beta}\le 1}
\widehat{U}_{t'}(\cdot)
\]

\FORALL{$m$}
\STATE Compute the predicted available data rate:
\[
R_{t',m}
\gets
\beta_{t',m}R(W)
+
(1-\beta_{t',m})R(N,\beta_{t',m})
\]
\ENDFOR

\STATE Select the cluster head:
\(
m^\star \gets \arg\max_{m\in\{1,\dots,M\}} R_{t',m}
\)
\STATE Designate vehicle \(m^\star\) as the cluster head

\end{algorithmic}}
\end{algorithm}

The inverse learning of the ISAC-RSU strategy, together with the corresponding predictive scheduling and dynamic cluster head selection protocols under the proposed IL framework, is illustrated through detailed numerical examples in Sec.~\ref{sec:NumRes}.

\section{Numerical Results}\label{sec:NumRes}
To perform the numerical experiments, we consider the system model in Fig.~\ref{RSU_V2I}, consisting of a VMC with \(M=3\) vehicles within the footprint of an RSU. The numerical study focuses on the following two use cases: \begin{inparaenum}[(i)]
    \item adaptive cluster head selection in the VMC, and
    \item predictive scheduling for cooperative data downloading in the VMC.
\end{inparaenum}
Further, the numerical examples are used to evaluate the proposed inverse learning approaches, namely ATIL, FICNNIL, and PICNNIL.

\begin{table*}[t]
\centering
\caption{Comparison of inverse learning models for fixed and state dependent utility settings under different numbers of observations}
\label{tab:il_comparison_t20_t200}
\resizebox{\textwidth}{!}{%
\begin{tabular}{|l|cccc|cccc|cccc|cccc|}
\hline
\multirow{3}{*}{\textbf{Model}} 
& \multicolumn{8}{c|}{\textbf{$T = 20$}} 
& \multicolumn{8}{c|}{\textbf{$T = 200$}} \\ \cline{2-17}

& \multicolumn{4}{c|}{\textbf{Fixed Utility}} 
& \multicolumn{4}{c|}{\textbf{State Dependent Utility}} 
& \multicolumn{4}{c|}{\textbf{Fixed Utility}} 
& \multicolumn{4}{c|}{\textbf{State Dependent Utility}} \\ \cline{2-17}
& \textbf{MSE}\((\downarrow)\) 
& \textbf{\(\rho\)}\((\uparrow)\) 
& \textbf{\(\bar{R}\)}\((\uparrow)\) 
& \textbf{\(\bar{L}\)}\((\downarrow)\) 
& \textbf{MSE}\((\downarrow)\) 
& \textbf{\(\rho\)}\((\uparrow)\) 
& \textbf{\(\bar{R}\)}\((\uparrow)\) 
& \textbf{\(\bar{L}\)}\((\downarrow)\) 
& \textbf{MSE}\((\downarrow)\) 
& \textbf{\(\rho\)}\((\uparrow)\) 
& \textbf{\(\bar{R}\)}\((\uparrow)\) 
& \textbf{\(\bar{L}\)}\((\downarrow)\) 
& \textbf{MSE}\((\downarrow)\) 
& \textbf{\(\rho\)}\((\uparrow)\) 
& \textbf{\(\bar{R}\)}\((\uparrow)\) 
& \textbf{\(\bar{L}\)}\((\downarrow)\)  \\ \hline

ATIL 
& \textbf{4.15E-05} & \textbf{100} & \textbf{1.000} & \textbf{0.054} 
& 1.837E-04 & \textbf{100} & 0.992 & 0.166
& \textbf{4.10E-05} & \textbf{100} & \textbf{1.000} & \textbf{0.0917}
& 1.84E-04 & \textbf{100} & 0.9920 & 0.1884 \\ %

FICNNIL 
& 9.50E-05 & 94 & 0.995 & 0.079 
& 4.150E-05 & 93 & 0.998 & 0.083
& 6.50E-05 & 99 & 0.9959 & 0.1071
& 2.80E-05 & 99 & 0.9997 & 0.0746 \\ %

PICNNIL 
& 6.91E-05 & 94 & 0.997 & 0.068 
& \textbf{2.597E-05} & 93 & \textbf{1.000} & \textbf{0.063}
& 5.40E-05 & 99 & 0.9966 & 0.0992
& \textbf{2.50E-05} & \textbf{100} & \textbf{1.0000} & \textbf{0.0672} \\ \hline

FICNN-PEARL\cite{grzeskiewicz2025uncovering_ICNN_Utility_Max} 
& 3.03E-02 & 76 & 0.8670 & 0.8302
& 1.32E-02 & 73 & 0.9273 & 0.5891
& 1.07E-02 & 90 & 0.9309 & 0.5309
& 1.17E-02 & 80 & 0.9313 & 0.3424 \\ %

PICNN-PEARL\cite{grzeskiewicz2025uncovering_ICNN_Utility_Max} 
& 3.14E-02 & 56 & 0.8684 & 0.8266
& 4.12E-02 & 56 & 0.8600 & 0.8506
& 4.64E-02 & 56 & 0.8265 & 0.8963
& 9.78E-02 & 61 & 0.7046 & 0.9195 \\

LinearRegression 
& 5.78E-04 & 86 & 0.955 & 0.058 
& 1.827E-04 & 81 & 0.983 & 0.105
& 1.76E-04 & 99 & 0.9840 & 0.1813
& 1.47E-04 & 97 & 0.9906 & 0.2325 \\ %

Ridge 
& 2.87E-04 & 86 & 0.976 & 0.082 
& 1.984E-04 & 79 & 0.988 & 0.149
& 1.75E-04 & 99 & 0.9844 & 0.1833
& 1.48E-04 & 97 & 0.9909 & 0.2352 \\ %

Lasso 
& 2.48E-03 & 9 & 0.978 & 0.619 
& 2.327E-03 & 16 & 0.966 & 0.869
& 2.216E-03 & 19 & 0.9654 & 0.8739
& 2.317E-03 & 24 & 0.9658 & 1.0000 \\ %

DecisionTree 
& 1.78E-03 & 42 & 0.980 & 0.399 
& 9.905E-04 & 40 & 0.972 & 0.449
& 2.10E-04 & 72 & 0.9938 & 0.2368
& 3.02E-04 & 58 & 0.9878 & 0.2380 \\ %

RandomForest 
& 8.50E-04 & 46 & 0.994 & 0.359 
& 6.950E-04 & 61 & 0.988 & 0.464
& 7.50E-05 & 85 & 0.9988 & 0.1646
& 1.06E-04 & 79 & 0.9986 & 0.2007 \\ %

MLPRegressor 
& 9.40E-03 & 35 & 0.962 & 1.000 
& 4.889E-03 & 44 & 0.943 & 1.000
& 4.727E-03 & 52 & 0.9499 & 1.0000
& 2.922E-03 & 60 & 0.9382 & 0.7545 \\ \hline
\end{tabular}%
}
\textbf{Note}: \(\rho\): Ordinal Accuracy(\(\%\));\;\; \(\bar{R}\): Max Normalized Average Throughput;\;\;\(\bar{L}\): Max Normalized Packet Loss. Number of test samples \(=100\)
\end{table*}

\paragraph*{Simulation Parameters}

The RSU is assumed to be located at the origin, and the initial positions of all three vehicles in each simulation iteration are randomly generated within the RSU footprint. Further, the RSU employs a Kalman filter, under a constant-velocity target dynamics model, to track the vehicles. The RSU utilizes the ISAC-AB strategy in~\eqref{eq:ISAC_AB_Objective} to adaptively allocate the beam splitting ratio, and hence the communication data rate, based on the tracking accuracy
\[
\alpha_{t,m}=\operatorname{trace}\!\left(\boldsymbol{\Sigma}_{t\mid t-1,m}^{-1}\right).
\]
The RSU utility function used for simulation is given by~\eqref{eq:ISACABUtil}, with \(R(W)=1\) and \(R(N)=4R(W)\). The parameters \(w_m\), \(\delta_{t,m}\), and \(\sigma_{\phi}\) are set to unity.

{\sl Fixed and State Dependent Utility Functions:} The dependence of the RSU utility on the state variable \(\boldsymbol{\alpha}_t\) is captured through the weighting function \(p(\boldsymbol{\alpha}_t):\boldsymbol{\alpha}_t\mapsto[0,1]\) in~\eqref{eq:ISACABUtil}, which reflects the relative preference between sensing-oriented and communication-oriented operation. Such weighting is motivated by the utility based wireless resource allocation\cite{kuo2007utility_P_alpha} and ISAC literature~\cite{du2023v2i_isac_P_alpha,Liu2023ISAC_VN_Chapter,al2024ISACresources}, where sensing--communication tradeoffs are commonly modeled through weighted objective functions or weighted sum-rate formulations. In the fixed-utility setting, \(p(\boldsymbol{\alpha}_t)\) is chosen as a constant \(p_0 = 0.5\), signifying equal weightage for widebeam (sensing) and narrowbeam (communication). 
For numerical simulations in the state-dependent utility setting, following~\cite{kuo2007utility_P_alpha}, we model \(p(\boldsymbol{\alpha}_t)\) as a mapping from the average tracking accuracy, \(\frac{1}{M}\sum_{m=1}^{M}\alpha_{t,m}\), to the continuous set \([0,1]\). However, to avoid the extreme values \(0\) and \(1\), corresponding to only wide-beam or only narrow-beam allocation, respectively, we model \(p(\boldsymbol{\alpha}_t)\in [0.1,0.9]\) as follows:
\begin{equation}\label{eq:p_alpha_num}
p(\boldsymbol{\alpha}_t)=
\begin{cases}
p_0 = 0.5, & \text{fixed utility},\\[1ex]
0.1+0.8\,\bar{z}(\boldsymbol{\alpha}_t), & \text{state-dependent utility},
\end{cases}
\end{equation}
where
\begin{equation}\label{eq:zbar_alpha_num}
\bar{z}(\boldsymbol{\alpha}_t)
=
\mathrm{clip}\!\left(
\frac{\frac{1}{M}\sum_{m=1}^{M}\alpha_{t,m}-\alpha_{\min}}
{\alpha_{\max}-\alpha_{\min}},
\,0,\,1
\right),
\end{equation}
with \(\mathrm{clip}(x,0,1)=\min\{\max\{x,0\},1\}\). In \eqref{eq:zbar_alpha_num}, \(\alpha_{t,m}\) denotes the ISAC RSU's tracking accuracy for vehicle \(m\). Further, \(\alpha_{\min}=1.5\) and \(\alpha_{\max}=8\) are chosen in \eqref{eq:zbar_alpha_num} as design constants consistent with the approximate minimum and maximum values of the individual-vehicle tracking accuracies observed in the simulation.

For each experiment, under the centralized learning paradigm, \(T\) input-output observations of the form
\[
\mathcal{D}_{1:T}=\{(\boldsymbol{\alpha}_t,\boldsymbol{\beta}_t)\}_{t=1}^{T}
\]
are generated and assumed to be available to the VMC for inverse learning. The numerical results for the centralized learning setting, for both static and dynamic utility models and using ATIL, FICNNIL, and PICNNIL, are presented in Sections~\ref{subsec:Central_Reconstr_Util}--\ref{subsec:Central_Cluster_Head}. The numerical results corresponding to the federated inverse learning paradigm are presented later in Sec.~\ref{subsec:Federated_Num_Res}.

\subsection{Inverse learning -- Reconstruction of the utility function}\label{subsec:Central_Reconstr_Util}

The effectiveness of the proposed inverse learning approaches, namely ATIL, FICNNIL, and PICNNIL, in reconstructing the RSU decision-making behavior is evaluated for both static and dynamic utility settings. The performance is quantified in terms of the accuracy of capturing the ordinal preferences of beam allocation across the vehicles, defined as
\begin{equation}\label{eq:num_ord_pref}
    \rho
    =
    \frac{1}{M T_{test}}
    \sum_{t'=1}^{T_{test}}
    \sum_{m=1}^{M}
    \mathbf{1}_{\{(rank(\beta_{t',m})-rank(\beta_{t',m}^{(\mathrm{pred})}))=0\}},
\end{equation}
and the mean squared error
\begin{equation}\label{eq:mse_num}
\mathrm{MSE}
=
\frac{1}{T_{test}}
\sum_{t'=1}^{T_{test}}
\left\|
\boldsymbol{\beta}_{t'}-\boldsymbol{\beta}_{t'}^{(\mathrm{pred})}
\right\|^2,
\end{equation}
between the predicted and actual RSU actions over the test probes.

The ``MSE'' and ``Ordinal Accuracy'' columns of Table~\ref{tab:il_comparison_t20_t200} report the IL performance in a centralized setting for both static and dynamic utility formulations. The results are presented for different numbers of training samples, namely $T=20$ and $T=200$, while the number of test samples is fixed at $100$ across all cases. 
Table~\ref{tab:il_comparison_t20_t200} further provides a comparative evaluation of ATIL, FICNNIL, and PICNNIL against standard supervised regression models from the classical machine learning literature, which aim to learn the mapping from the state of the world $\boldsymbol{\alpha}_t$ to the corresponding RSU action $\boldsymbol{\beta}_t$.

{\sl Effectiveness in capturing ordinal preferences:} 
Correct identification of ordinal preferences implies that the predicted beam-splitting ratios preserve the ordering of RSU–vehicle data rates from highest to lowest. The accuracy of ordinal preference recovery, quantified by $\rho$ in~\eqref{eq:num_ord_pref}, is reported in Table~\ref{tab:il_comparison_t20_t200}. 

Due to the structure of the ISAC-AB RSU utility in~\eqref{eq:ISACABUtil}, along with its dependence on the environmental state as defined in~\eqref{eq:p_alpha_num}, the ordinal preferences over RSU actions remain identical for both fixed and state dependent utility formulations for any given $\boldsymbol{\alpha}_t$. Consequently, ATIL, which directly leverages Afriat's Theorem (Theorem~\ref{Th:Afriats}) for nonparametric utility reconstruction, achieves perfect recovery of ordinal preferences ($\rho = 100\%$) across both utility settings and for all training sample sizes.

In contrast, FICNNIL and PICNNIL, which rely on a revealed preference-inspired bilevel optimization framework with structured parametric models, achieve an ordinal accuracy of approximately $94\%$ for $T=20$ training samples and nearly $100\%$ for $T=200$ in the static utility setting. In the dynamic setting, both methods attain an ordinal accuracy of $93\%$ for $T=20$, while for $T=200$, FICNNIL and PICNNIL achieve $99\%$ and $100\%$, respectively. The marginally improved performance of PICNNIL over FICNNIL in the dynamic utility setting can be attributed to the ability of the PICNN architecture to explicitly capture the dependence of the utility function on the environmental state $\boldsymbol{\alpha}_t$.

These results indicate that, with sufficient training data, the ordinal preference recovery performance of the proposed parametric IL frameworks (FICNNIL and PICNNIL) approaches the theoretical optimum achieved by ATIL. The marginal degradation observed for smaller training sample sizes can be attributed to the inherent data requirements of parametric machine learning models. Nevertheless, as evident from Table~\ref{tab:il_comparison_t20_t200}, irrespective of the number of training samples, the ordinal accuracy achieved by ATIL, FICNNIL, and PICNNIL significantly outperforms that of classical supervised learning benchmarks, which aim to directly learn a mapping from $\boldsymbol{\alpha}_t$ to $\boldsymbol{\beta}_t$ without exploiting the underlying preference structure.
Conventional data-driven learning models typically pose IL as a supervised regression/classification task \cite{gweon2023BehClone} that maps the input $\boldsymbol{\alpha}$ (state of the world-kinematic states of vehicles) to the output $\boldsymbol{\beta}$ (RSU action) by minimizing empirical risk, e.g., mean-squared error (MSE) for regression or categorical cross-entropy for classification, over observed input--output pairs $(\boldsymbol{\alpha},\boldsymbol{\beta})$.  
Since the RSU's rational behavior is assumed to be consistent with \eqref{eq:RSU_Strategy_general}, IL models and algorithms, that we derive in this paper,
explicitly exploit this structural knowledge.

{\sl MSE performance:} 
Although the ordinal preferences of the RSU remain identical under both fixed and state dependent utility formulations, the corresponding optimal beam-splitting ratios—and hence the achievable data rates—differ significantly across these settings. The MSE measures the discrepancy between the predicted RSU actions obtained via inverse learning and the true RSU actions, and is therefore a critical metric influencing throughput and packet error performance in predictive scheduling. The \textit{MSE} columns in Table~\ref{tab:il_comparison_t20_t200} provide a comparative evaluation of different IL approaches under both fixed and state dependent utility settings in a centralized learning framework.

In the fixed utility setting, where the RSU utility is invariant with respect to the environmental state $\boldsymbol{\alpha}_t$, ATIL achieves the lowest MSE across all training sample sizes, consistent with the guarantees provided by Afriat's Theorem. FICNNIL and PICNNIL also achieve comparably low MSE values on the order of $10^{-5}$. In contrast, classical supervised regression benchmarks exhibit significantly higher MSE, at least an order of magnitude larger (i.e., $10^{-4}$ or higher), irrespective of the number of training samples.

In the state dependent utility setting, FICNNIL and PICNNIL continue to maintain low MSE on the order of $10^{-5}$. Notably, PICNNIL achieves the best MSE of $2.6 \times 10^{-5}$ for $T=20$, which is approximately $50\%$ lower than the next best value of $4.2 \times 10^{-5}$ achieved by FICNNIL. In contrast, ATIL—due to its assumption that the utility depends solely on $\boldsymbol{\beta}_t$—attains an MSE of $1.84 \times 10^{-4}$, which is nearly an order of magnitude higher than its performance in the fixed utility setting. Nevertheless, the MSE achieved by FICNNIL and PICNNIL remains consistently lower than that of classical supervised learning benchmarks.

\subsection{IL-Assisted Predictive Scheduling for Cooperative Data Downloading in VMC}

We employ Algorithm~\ref{ATIL_Based_Pred_Scheduling} to schedule the maximum possible data download for each vehicle within a unit-time epoch. Let \(\beta_{t,m}(i)\) and \(R_{t,m}(i)\) denote, respectively, the predicted RSU beam allocation and the corresponding predicted downlink data rate obtained using inverse learning model \(i\), where \(i\in\{\text{ATIL},\text{FICNNIL},\text{PICNNIL}\}\) or one of the baseline regression models. The beam-splitting ratios obtained using the respective IL/regression models are used to perform predictive scheduling, as given in Steps~\(4\) through~\(6\) of Algorithm~\ref{ATIL_Based_Pred_Scheduling}.

If the scheduled data size \(R_{t,m}(i)\) exceeds the actual available rate \(R_{t,m}\), the excess amount is treated as lost data. On the other hand, if \(R_{t,m}(i)<R_{t,m}\), the remaining capacity is left unused, but no packet loss is incurred. The predictive scheduling performance is evaluated in terms of the maximum normalized throughput
\begin{equation}\label{eq:maxNormThrou}
    R_t'(i)=\frac{\sum_m R_{t,m}(i)}{\max_j \sum_m R_{t,m}(j)},
\end{equation}
and the maximum normalized packet loss
\begin{equation}\label{eq:packetLoss}
    \hspace{-1em}L'(i)=\frac{\sum_m \big(R_{t,m}(i)-R_{t,m}\big)\mathbf{1}_{\{R_{t,m}(i)-R_{t,m}>0\}}}{\max_j \sum_m \big(R_{t,m}(j)-R_{t,m}\big)\mathbf{1}_{\{R_{t,m}(j)-R_{t,m}>0\}}}.
\end{equation}

The \textit{Avg Throughput (\(\bar{R}\))} and \textit{Packet Loss (\(\bar{L}\))} columns in Table~\ref{tab:il_comparison_t20_t200} (both normalized by their respective column-wise maximum values for ease of comparison) report the predictive scheduling performance under a centralized learning setting. The throughput and packet loss metrics are directly influenced by the MSE performance (see Sec.~\ref{subsec:Central_Reconstr_Util}) of the corresponding models.
In the fixed utility setting, ATIL achieves the highest average throughput and the lowest packet loss. In contrast, under the state dependent utility setting, PICNNIL demonstrates the best performance in terms of both throughput and packet loss, followed by FICNNIL and ATIL, respectively. Overall, the proposed IL approaches—ATIL, FICNNIL, and PICNNIL—consistently outperform classical supervised learning benchmarks in predictive scheduling performance.
For instance, in the case of $T=200$ training samples, the packet loss achieved by the best-performing IL-based approach is, on average, approximately $50\%$ and $75\%$ lower than that of the best-performing classical machine learning benchmarks in the fixed and state dependent settings, respectively. This reduction indicates a significant decrease in packet retransmissions and energy expenditure, along with improved latency performance.

\subsection{IL-assisted dynamic cluster head selection}\label{subsec:Central_Cluster_Head}

We evaluate VMC cluster head selection under the following strategies:
\begin{inparaenum}[(i)]
    \item the proposed inverse learning-assisted dynamic link selection, using ATIL, FICNNIL, or PICNNIL predictions, as discussed in Sec.~\ref{sec:ILDynamic},
    \item assigning the first vehicle in the VMC to establish communication (\texttt{First\_Fixed})~\cite{zhao2025platoon01,Platooning2023proximity1}, and
    \item assigning the vehicle closest to either of the RSUs to establish communication (\texttt{Nearest\_Fixed})~\cite{zhao2025platoon01,Platooning2023proximity1}.
\end{inparaenum}

Table~\ref{tab:leader_comparison_combined} reports the normalized throughput achieved by the selected cluster head using different cluster head selection strategies for varying test dataset sizes under both fixed and state dependent utility settings, with the number of training samples fixed at $T=200$. In each case, the throughput values are normalized with respect to the maximum throughput attained among all compared strategies.
The results indicate that the proposed IL-assisted dynamic selection methods consistently identify the vehicle with the most favorable communication link to the RSU as the cluster head, thereby outperforming the fixed selection baselines.
Overall, these findings further demonstrate the effectiveness of the proposed IL-based framework for adaptive cluster head selection in 6G-ISAC vehicular networks.

\begin{table}[t]
\centering
\caption{Comparison of cluster head selection methods in VMC}
\label{tab:leader_comparison_combined}
\resizebox{\columnwidth}{!}{%
\begin{tabular}{|c|ccccc|}
\hline
\multicolumn{6}{|c|}{\textbf{Fixed Utility}} \\ \hline
\textbf{Test Data Size} 
& \textbf{ATIL} 
& \textbf{FICNNIL} 
& \textbf{PICNNIL} 
& \textbf{First\_Fixed} 
& \textbf{Nearest\_Fixed} \\ \hline

20  
& \textbf{1.0} & \textbf{0.95} & \textbf{0.95} & 0.84 & 0.66 \\

200 
& \textbf{1.0} & \textbf{0.95} & \textbf{1.0} & 0.85 & 0.67 \\ \hline

\multicolumn{6}{|c|}{\textbf{State Dependent Utility}} \\ \hline
\textbf{Test Data Size} 
& \textbf{ATIL} 
& \textbf{FICNNIL} 
& \textbf{PICNNIL} 
& \textbf{First\_Fixed} 
& \textbf{Nearest\_Fixed} \\ \hline

20  
& \textbf{1.0} & \textbf{0.90} & \textbf{1.0} & 0.82 & 0.64 \\

200 
& \textbf{1.0} & \textbf{0.95} & \textbf{1.0} & 0.85 & 0.67 \\ \hline

\end{tabular}%
}
\end{table}

\subsection{Federated inverse learning -- Reconstruction of the utility function}\label{subsec:Federated_Num_Res}
\begin{figure}[h]
    \centering
    \includegraphics[width=0.8\linewidth]{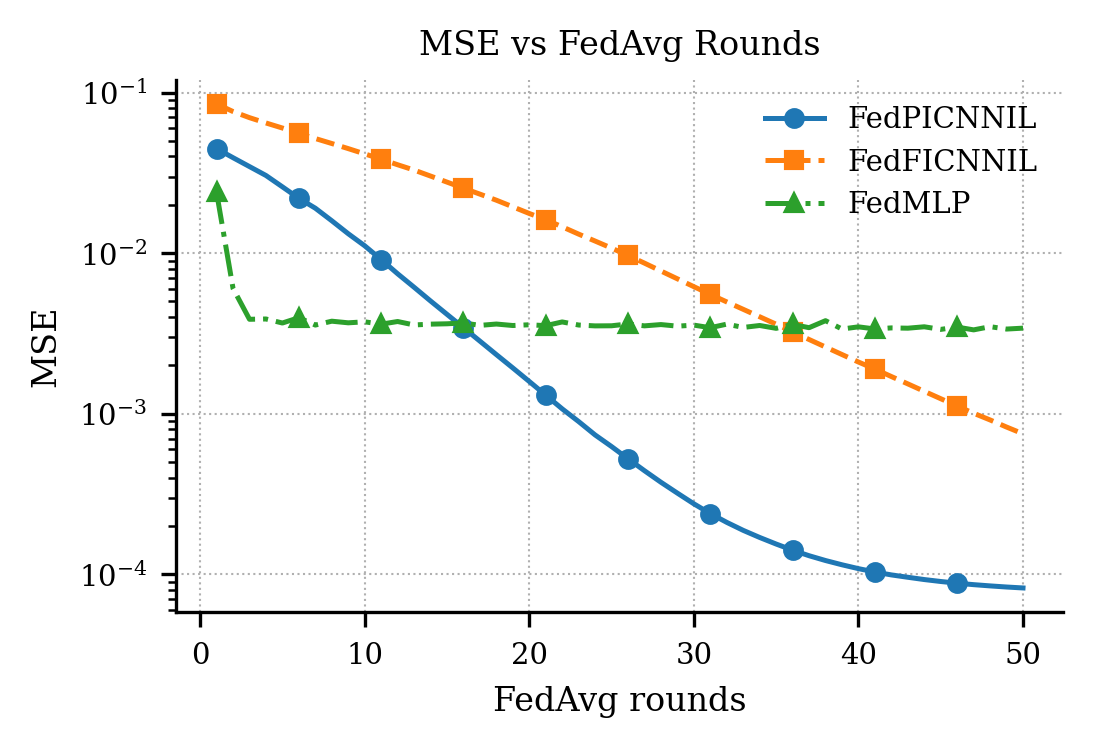}
\caption{Prediction MSE versus federated averaging rounds. The proposed ICNN-based federated inverse learning method steadily reduces MSE, while the naive MLP-based baseline saturates after a few initial rounds.}
    \label{fig:mse_vs_rounds_fed}
\end{figure}

\begin{figure}[t]
    \centering
    \includegraphics[width=0.8\linewidth]{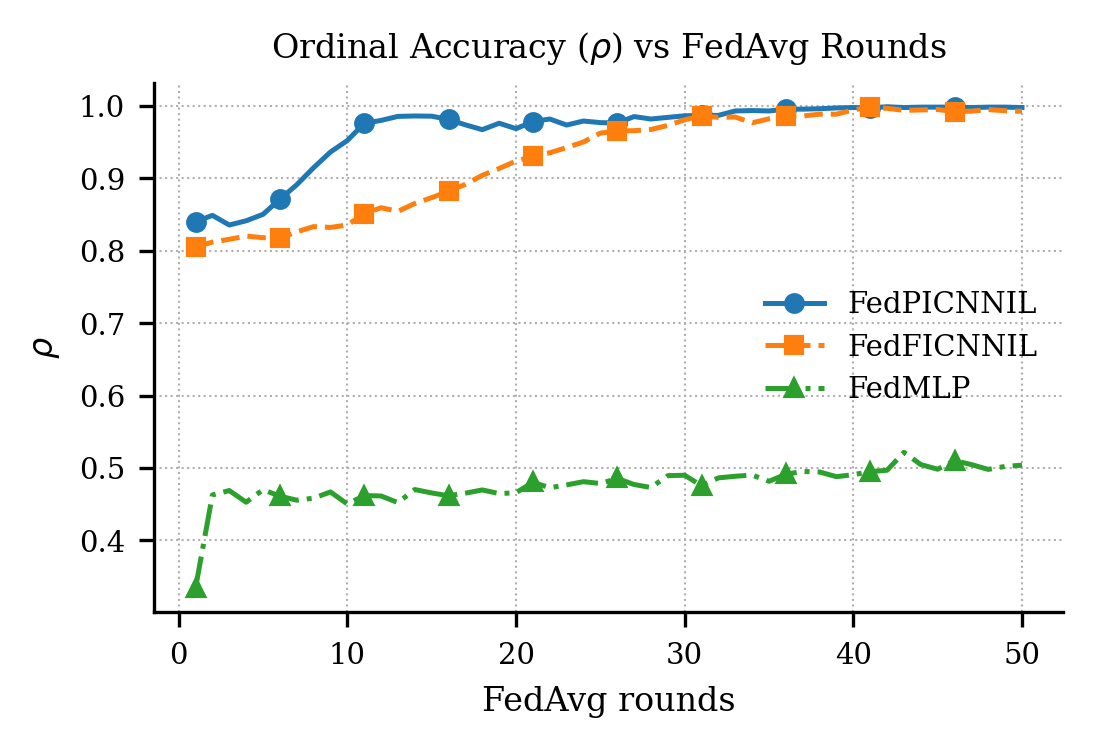}
    \caption{Prediction ordinal accuracy vs Federated averaging round. The proposed ICNN-based federated inverse learning method steadily improves and achieves \(\sim 100\%\) ordinal accuracy, while the naive MLP-based baseline saturates around \(\sim 50\%\) after a few initial rounds.}
    \label{fig:rho_vs_rounds_fed}
\end{figure}
We next evaluate the proposed federated inverse learning approaches, namely FedFICNNIL and FedPICNNIL, in Algorithm~\ref{alg:Federated_ICNNIL}. In this setting, the revealed preference dataset is distributed across multiple VMC clients, and only model parameters are communicated to the server for aggregation, without sharing raw local observations. We focus on the state dependent utility setting, wherein significant heterogeneity exists across local datasets available to individual VMCs. 

The federated IL experiments are carried out using \(K=2\) VMC clients, denoted by Client 1 and Client 2, as shown in Fig.~\ref{fig:BT_ATIL_Fed}. To emulate statistical heterogeneity across clients in the dynamic utility setting (equations \eqref{eq:ISACABUtil} and \eqref{eq:p_alpha_num}), the local datasets are generated such that the proportions of samples corresponding to low values of \(p(\boldsymbol{\alpha})\), i.e., \(p(\boldsymbol{\alpha})<0.5\), are \(70\%\) and \(30\%\) for Clients 1 and 2, respectively. Thus, the two clients observe different local operating regimes of the ISAC RSU, resulting in a non-IID revealed preference distribution across the federated network.
The federated inverse learning performance is evaluated using the same metrics as in the centralized setting, namely the ordinal preference accuracy \(\rho\) in~\eqref{eq:maxNormThrou} and the MSE between the predicted and actual RSU actions in \eqref{eq:mse_num}. The objective here is to assess whether the global model learned through federated averaging can accurately reconstruct the RSU decision-making behavior despite the presence of heterogeneous local data distributions.
\begin{table}[t]\centering
\caption{Comparison of Federated Inverse Learning Performance (Dynamic Utility)}
\label{tab:IL_compare_Dynamic_Fed_transposed}
\resizebox{0.4\textwidth}{!}{%
\scriptsize
\begin{tabular}{|l|cc|cc|}
\hline
\multirow{2}{*}{\textbf{IL Model}} 
& \multicolumn{2}{c|}{\textbf{$T=20$}} 
& \multicolumn{2}{c|}{\textbf{$T=200$}} \\ \cline{2-5}
& \textbf{$\rho$ (\%)} & \textbf{MSE} 
& \textbf{$\rho$ (\%)} & \textbf{MSE} \\ \hline

\textbf{FedFICNNIL} 
& \textbf{98.6} & \textbf{9.5E-4} 
& \textbf{99.2} & \textbf{7.3E-4} \\ \hline

\textbf{FedPICNNIL} 
& \textbf{98.6} & \textbf{9.7E-5} 
& \textbf{100} & \textbf{7.6E-5} \\ \hline

\textbf{FedMLP} 
& 3.4 & 8.6E-1 
& 50 & 4.1E-3 \\ \hline

\end{tabular}
}
\end{table}

From Figures~\ref{fig:mse_vs_rounds_fed} and~\ref{fig:rho_vs_rounds_fed}, along with Table~\ref{tab:IL_compare_Dynamic_Fed_transposed}, it is observed that FedPICNNIL achieves the highest ordinal (rank) accuracy and the lowest MSE in the federated setting under data heterogeneity induced by the state dependent nature of the ISAC RSU utility. This behavior is expected, as the local parameter updates in FedPICNNIL explicitly account for the dependence of the utility function on the environmental state $\boldsymbol{\alpha}_t$, whereas FedFICNNIL does not.
Consequently, as illustrated in Figures~\ref{fig:mse_vs_rounds_fed} and~\ref{fig:rho_vs_rounds_fed}, FedPICNNIL attains nearly $100\%$ ordinal accuracy and an MSE on the order of $10^{-5}$ within $50$ training rounds. In contrast, FedFICNNIL achieves a slightly lower ordinal accuracy of approximately $97\%$ and an MSE on the order of $10^{-4}$. 
As a baseline, we compare FedPICNNIL and FedFICNNIL with a FedAvg extension of a supervised MLP regressor that maps \(\boldsymbol{\alpha}_t\) to the corresponding \(\boldsymbol{\beta}_t\) by minimizing the mean-squared error. Both FedPICNNIL and FedFICNNIL significantly outperform the classical MLP regression baseline in the federated setting, as evidenced by the results in Table~\ref{tab:IL_compare_Dynamic_Fed_transposed}, in terms of both ordinal accuracy and MSE.

\section{Conclusions}\label{sec:Conclusion}

This paper presented data-centric inverse learning (IL) frameworks—namely ATIL, FICNNIL, PICNNIL, FedFICNNIL and FedPICNNIL—for characterizing the strategic behavior of an autonomous intent-driven 6G-ISAC RSU. 
We proposed a novel bilevel training framework for training FICNN and PICNN networks.
The proposed IL frameworks leverage revealed preference theory from microeconomics to infer the underlying RSU utility function from observed actions.
The effectiveness of the IL-based adaptive communication strategies was demonstrated for vehicular networks through two applications:
\begin{inparaenum}[(i)]
    \item predictive scheduling for cooperative data downloading, and
    \item channel-capacity-aware predictive cluster head selection,
\end{inparaenum}
under both fixed and state dependent RSU utility settings, and within both centralized and federated learning paradigms.
ATIL is particularly suitable for scenarios involving fixed RSU utilities under centralized learning settings. In contrast, the proposed parametric approaches exhibit greater generalizability and improved performance in state dependent and federated settings. In particular, FedPICNNIL is well-suited for inverse learning in federated environments with significant statistical heterogeneity across local VMC datasets, owing to its ability to model state-dependent utilities.
Empirical results demonstrate that the proposed IL-based strategies consistently outperform conventional non-learning-based approaches as well as classical machine learning methods in terms of throughput and packet loss, thereby highlighting their effectiveness for adaptive V2I communication in 6G-ISAC vehicular networks.

\bibliographystyle{IEEEtran}
\bibliography{ref_short}
\end{document}